\documentclass[letterpaper]{article} 
\usepackage{aaai2026}  
\usepackage{times}  
\usepackage{helvet}  
\usepackage{courier}  
\usepackage[hyphens]{url}  
\usepackage{graphicx} 
\usepackage{natbib}  
\usepackage{caption} 
\usepackage{subfigure}
\usepackage{multirow}
\usepackage{algorithm}
\usepackage{amsmath}
\usepackage{booktabs}
\usepackage{dsfont}

\usepackage{amsmath,amsfonts,bm}

\def\eqref#1{equation~\ref{#1}}

\def\1{\bm{1}}

\DeclareMathAlphabet{\mathsfit}{\encodingdefault}{\sfdefault}{m}{sl}
\SetMathAlphabet{\mathsfit}{bold}{\encodingdefault}{\sfdefault}{bx}{n}

\usepackage{amsthm,amsmath,amsfonts,bm,amssymb}
\usepackage{physics,graphicx}
\usepackage{xcolor}
\usepackage[shortlabels]{enumitem}
\newtheorem{theorem}{Theorem}

\newtheorem{definition}{Definition}

\theoremstyle{plain}

\theoremstyle{definition}

\theoremstyle{remark}

\usepackage{enumitem}
\definecolor{tan}{rgb}{0.937, 0.902, 0.843}
\usepackage{newfloat}
\usepackage{listings}
\DeclareCaptionStyle{ruled}{labelfont=normalfont,labelsep=colon,strut=off} 
\floatstyle{ruled}
\newfloat{listing}{tb}{lst}{}
\floatname{listing}{Listing}
\title{Graph Domain Adaptation via Homophily-Agnostic Reconstructing Structure}
\author{
    Ruiyi Fang\textsuperscript{\rm 1},
    Shuo Wang\textsuperscript{\rm 2},
    Ruizhi Pu\textsuperscript{\rm 1}\thanks{Corresponding author},
    Qiuhao Zeng\textsuperscript{\rm 1},
    Hao Zheng\textsuperscript{\rm 3},
    Ziyan Wang\textsuperscript{\rm 1},
    Jiale Cai\textsuperscript{\rm 1},
    Zhimin Mei\textsuperscript{\rm 1},
    Song Tang\textsuperscript{\rm 4},
    Charles Ling\textsuperscript{\rm 1}, 
    Boyu Wang\textsuperscript{\rm 1}
}

\affiliations{
    \textsuperscript{\rm 1} Department of Computer Sceince, Western University\\
    \textsuperscript{\rm 2} School of Computer Science and Engineering, University of Electronic Science and Technology of China\\
    \textsuperscript{\rm 3} School of Computer Science and Engineering, Central South University\\
    \textsuperscript{\rm 4} Institute of Machine Intelligence, University of Shanghai for Science and Technology\\

}

\usepackage{bibentry}
\usepackage{amssymb,amsmath,amsthm}
\usepackage{mathtools}
\usepackage{amsfonts}
\usepackage{multirow}
\usepackage{diagbox}

\begin{document}

\maketitle

\begin{abstract}
Graph Domain Adaptation (GDA) transfers knowledge from labeled source graphs to unlabeled target graphs, addressing the challenge of label scarcity. However, existing GDA methods typically assume that both source and target graphs exhibit homophily, leading existing methods to perform poorly when heterophily is present. Furthermore, the lack of labels in the target graph makes it impossible to assess its homophily level beforehand. To address this challenge, we propose a novel homophily-agnostic approach that effectively transfers knowledge between graphs with varying degrees of homophily. Specifically, we adopt a divide-and-conquer strategy that first separately reconstructs highly homophilic and heterophilic variants of both the source and target graphs, and then performs knowledge alignment separately between corresponding graph variants. Extensive experiments conducted on five benchmark datasets demonstrate the superior performance of our approach, particularly highlighting its substantial advantages on heterophilic graphs.
\end{abstract}

\section{Introduction}
Graphs are pervasive and have been widely used in numerous real-world scenarios, such as social networks, traffic networks, and recommendation systems \cite{liu2021self}. However, they are often constrained by label scarcity, as annotating structured data remains challenging and costly \cite{AGC,liu2022multilayer}. To address this challenge, graph domain adaptation (GDA) has emerged as an effective paradigm for transferring knowledge from labeled source graphs to an unlabeled target graph. Existing GDA methods built on traditional GNNs typically rely on message passing mechanisms that assume homophily \cite{wu2020unsupervised,xie2023contrastive}. Facing heterophilic graphs, existing approaches mainly suffer from two limitations. Firstly, the local neighbors in a graph are nodes that are proximally located, while semantically similar nodes might be far apart on a heterophilic graph \cite{H2GCN,xie2025robust}. Thus, existing methods struggle to capture long-range divergence from distant nodes and to distinguish between homophilic and heterophilic neighbors, which convey different label information, leading to the fact that they perform well on homophilic graphs often exhibit degraded performance on heterophilic graphs \cite{fang2025benefits}.\\

\begin{figure}[t]
    \centering
    \includegraphics[width=.9\linewidth]{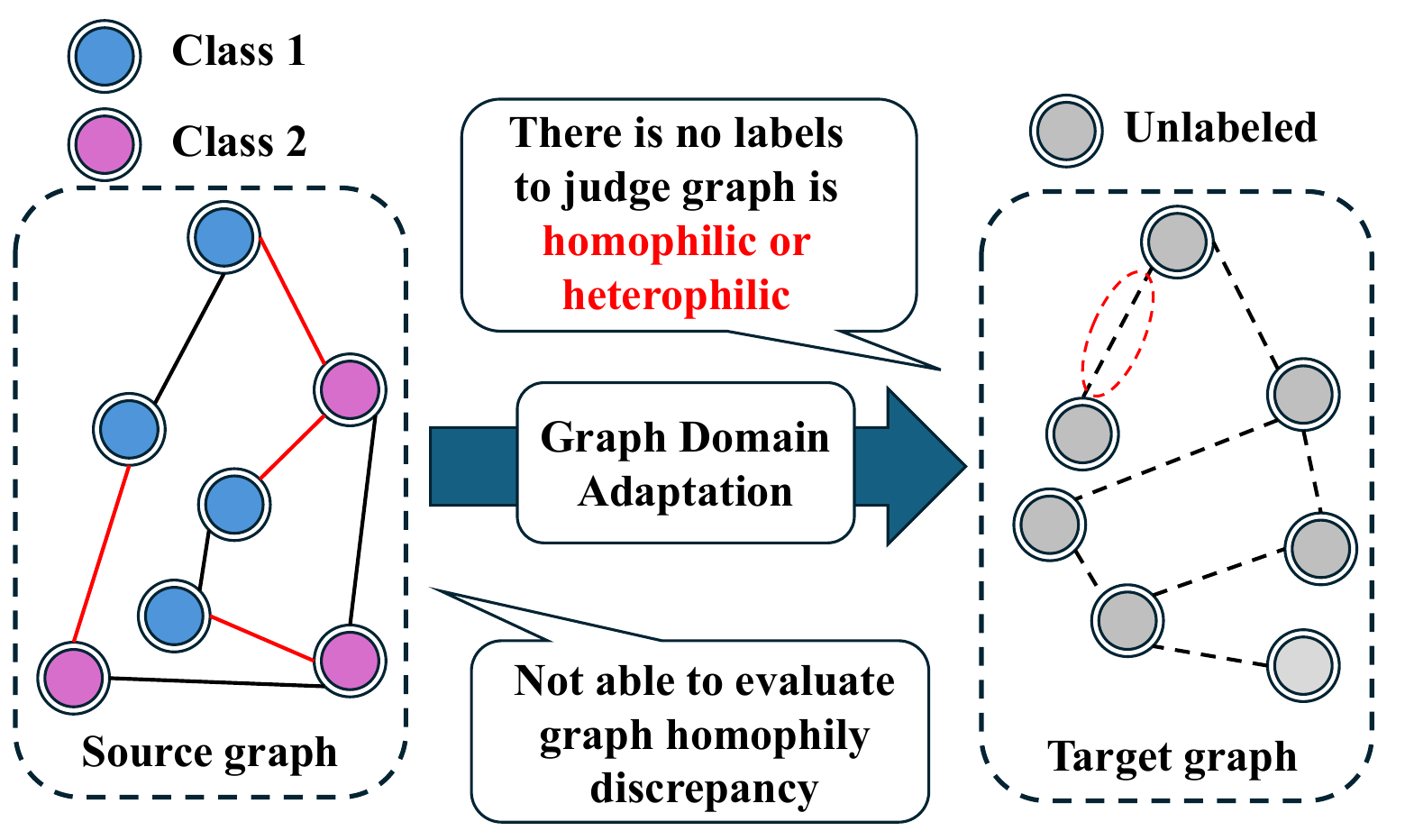}    
    \caption{Illustration of the graph domain adaptation task (best viewed in color). Given a labeled source graph (color indicates node label) and an unlabeled target graph, there are no labels for us to judge whether the target graph is homophilic or heterophilic.}
    \label{fig1}
\end{figure}

\begin{figure*}[t]
    \centering
    \subfigure[Framework Overview]{
        \includegraphics[width=0.59\textwidth]{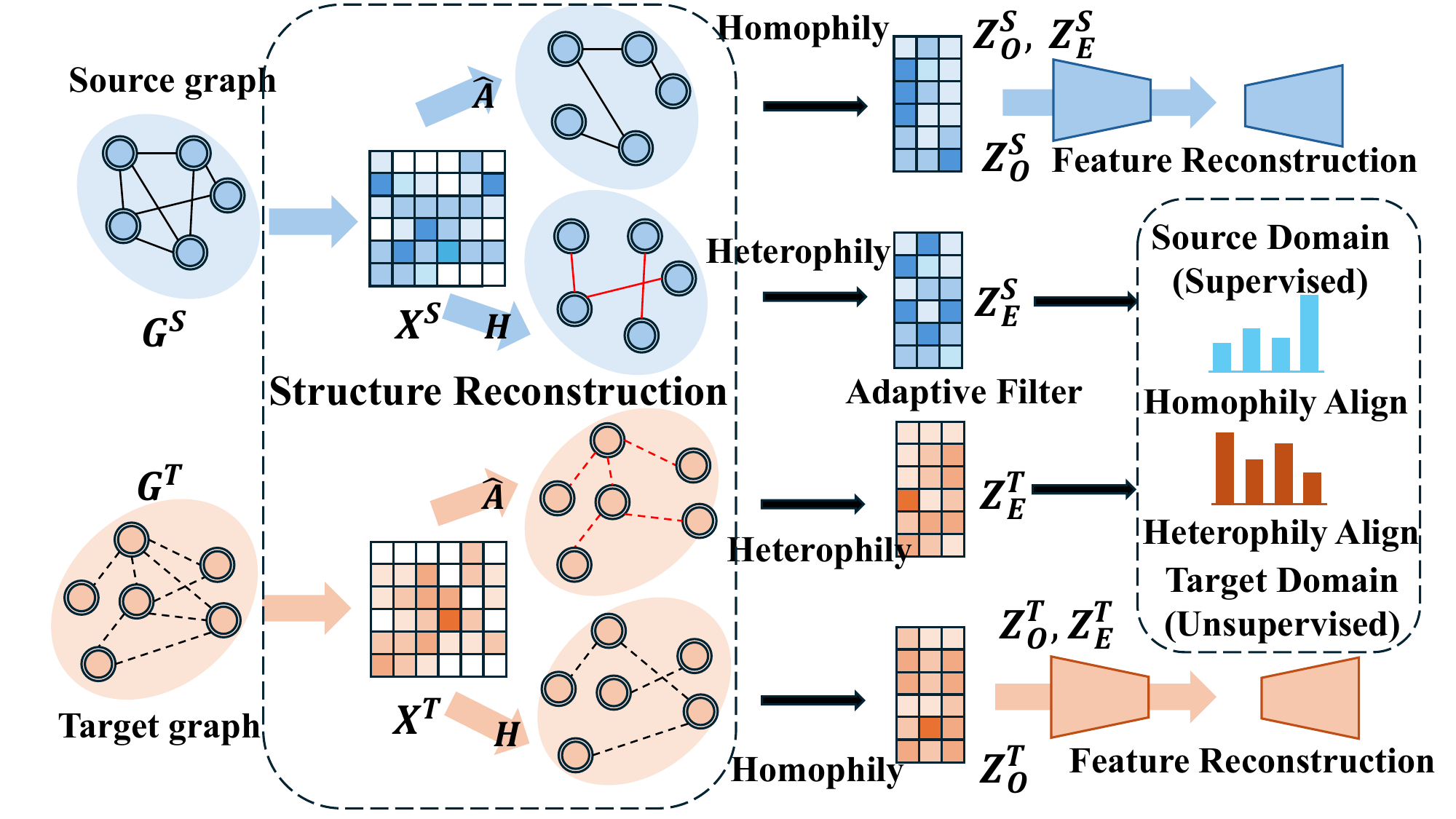}
    }
        \hspace{0.01\textwidth}
    \subfigure[Homophily Distribution of Original and reconstructed graph]{
        \includegraphics[width=0.36\textwidth]{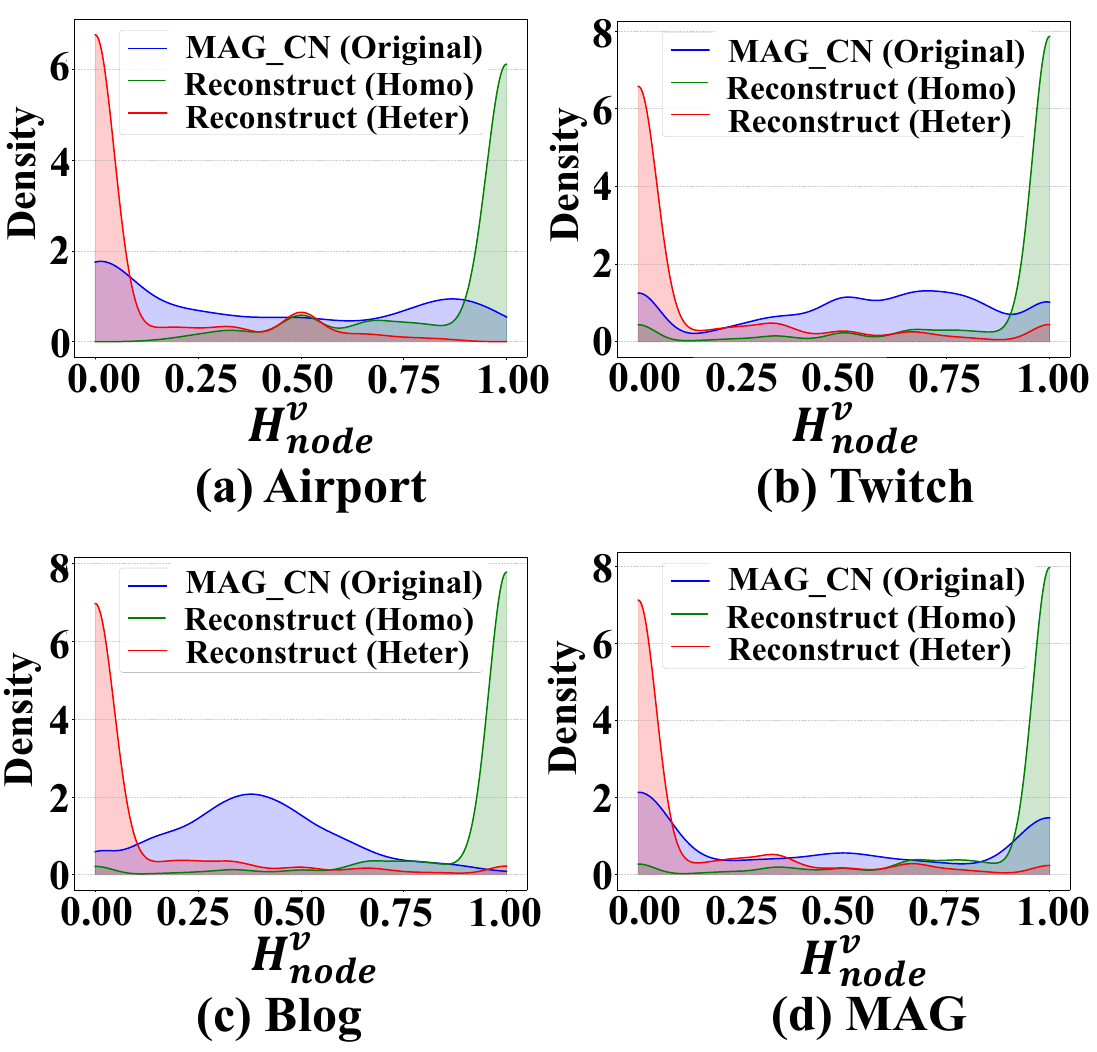}
    }
    \caption{(a) An overview of our method. RSGDA reconstructs structure to obtain homophilic and heterophilic variant graphs, where the final goal is to minimize the graph distribution shift through separately aligning homophily and heterophily. (b) The blue line represents the graph homophily distribution in five benchmarks. The red and green lines demonstrate that after structural reconstruction, RSGDA can effectively separate reconstructed homophilic and heterophilic variants. Homophily means that similar nodes are prone to connect to each other.}
    \label{framework}
\end{figure*}

Some graph clustering methods have been proposed to deal with heterophily by expanding neighbor fields\cite{BernNet, ACMNN, JacobiNet} or refining the GNN architectures\cite{FAGCN, DMP}.
Though the aforementioned methods can not be direct used in GDA due to there exists two critical problems: 1) The training of customized network, the learning of adaptive filter, and graph rewiring \cite{DHGR} rely on labeled samples, which makes them not be applicable to unsupervised GDA task (as shown in Fig. \ref{fig1}). 2) The above methods process graph signals could damaging entanglement of homophilic and heterophilic information hinders their ability to mitigate distributional divergence \cite{fanghomophily}. \\

 \textbf{For unsupervised GDA, the first and foremost challenge we face is that there is no labels for us to judge whether a graph is homophilic or heterophilic.} Therefore, it is not practical to develop individual models to handle homophilic and heterophilic graphs separately for GDA. Moreover, it is also subjective to simply classify a graph as homophilic or heterophilic since real-world graph data could have various levels of homophily. To address this pivotal challenge, we propose a holistic framework named reconstruction structure GDA (RSGDA) to accommodate real-world graphs in Fig. \ref{framework} (a). We first construct new homophilic and heterophilic graphs to explore both homophily and heterophily information. Notably, this graph reconstruction process is fully unsupervised and broadly applicable. Fig.~\ref{framework} (b) illustrates the effectiveness of the reconstruction process in RSGDA, which enables the extraction of varying levels of homophily by generating separate homophilic and heterophilic graph variants. An adaptive filter is employed to capture both homophilic and heterophilic graph signals, and the resulting features are fed into a homophily-agnostic alignment network that aligns these signals separately. We summarize our contributions as follows:
\begin{itemize}
\item We firstly deal with various levels of heterophilic graphs through constructing highly homophilic and heterophilic variants in GDA with homophily-agnostic graph.
\item We design an adaptive filtering and alignment mechanism that captures and aligns both homophilic and heterophilic signals of the data without labels.
\item Extensive experiments on homophilic and heterophilic benchmarks demonstrate the promising performance of our proposed method.
\end{itemize}

\section{Related work}
\subsection{Heterophilic Graph Learning}
Heterophilic structure is prevalent in practice, from personal relationships in daily life to chemical and molecular scientific study. Developing powerful heterophilic GNN models is a hot research topic. \cite{lim1, lim2} provide general benchmarks for heterophilic graph learning. In addition, many methods have been proposed to revise GNNs for heterophilic graphs. \cite{DMP} specifies propagation weight for each attribute to make GNNs fit heterophilic graphs and \cite{GloGNN} explores the underlying homophilic information by capturing the global correlation of nodes. \cite{H2GCN} enlarges receptive field via exploring high-order structure. \cite{GPR} adaptively combines the representation of each layer, and \cite{GCNII} integrates embeddings from different depths with a residual operation. Recent studies ~\cite{pu2025fedelr,luan2022revisiting} reveal that graph homophilc and heterophilic patterns impact graph clustering performance between the test and training sets. However, the above methods failed to explore the role of homophily in GDA and minimize their discrepancy in many aspects~\cite {kang2024cdc,li2024pc,zhuo2024improving,pu2025leveraging,zhuo2023propagation,zhuo2024unified,zhuo2025dualformer}.\\

 \subsection{Graph Domain Adaptation}
Recent Domain Adaptation works have differences from GDA methods ~\cite{li2024phrase,chen2024text,li2025let,li2024object,li2025simplified, wang2025multi,huang2025enhancing, huangnodes,huang2023robust}. Identifying the differences between the target and source graphs in GDA is crucial. For graph-structured data, several studies have explored cross-graph knowledge transfer using graph domain adaptation (GDA) methods~\cite{shen2019network,dai2022graph,shi2024graph, shen2024beyond, wang2025cooperation}. Some graph information alignment-based methods~\cite{shen2020adversarial,shen2020network,yan2020graphae,shen2023domain,zhuocloser,zheng2025test,liu2025test,zheng2024structure,xu2025unraveling} adapt graph source node label information by integrating global and local structures from both nodes and their neighbors. 
UDAGCN~\cite{wu2020unsupervised} introduces a dual graph convolutional network that captures both local and global knowledge, adapting it through adversarial training. Furthermore, ASN and GraphAE~\cite{zhang2021adversarial,guo2022learning} consider extracting and aligning graph specific information like node degree and edge shift, enabling the extraction of shared features across networks. SOGA~\cite{mao2021source} is the first to incorporate discriminability by promoting structural consistency between target nodes of the same class, specifically for source-free domain adaptation (SFDA) on graphs.
SpecReg~\cite{you2022graph} applies an optimal transport-based GDA bound and demonstrates that revising the Lipschitz constant of GNNs can enhance performance through spectral smoothness and maximum frequency response. 
JHGDA~\cite{shi2023improving} tackles hierarchical graph structure shifts by aggregating domain discrepancies across all hierarchy levels to provide a comprehensive discrepancy measure.
ALEX~\cite{yuan2023alex} creates a label-shift-enhanced augmented graph view using a low-rank adjacency matrix obtained through singular value decomposition, guided by a contrasting loss function. SGDA~\cite{qiao2023semi} incorporates trainable perturbations (adaptive shift parameters) into embeddings via adversarial learning, enhancing source graphs and minimizing marginal shifts. PA~\cite{liu2024pairwise} mitigates structural and label shifts by recalibrating edge weights to adjust the influence among neighboring nodes, addressing conditional structure shifts effectively. GAA~\cite{qiao2023semi} separately extracts and aligns graph attributes and structural information through a feature graph.

\section{Methodology}
\subsection{Notation}
An graph $ G = \left\{\mathcal{V}, \mathcal{E}, A, X, Y\right\} $ consists of a set of nodes $ \mathcal{V} $ and edges $ \mathcal{E} $, along with an adjacency matrix $ A $, a feature matrix $ X $, and a label matrix $ Y $. The adjacency matrix $ A \in \mathbb{R}^{N \times N} $ encodes the connections between $ N $ nodes, where $ A_{ij} = 1 $ indicates an edge between nodes $ i $ and $ j $, and $ A_{ij} = 0 $ means the nodes are not connected. The feature matrix $ X \in \mathbb{R}^{N \times d} $ represents the node features, with each node described by a $ d $-dimensional feature vector. Finally, $ Y \in \mathbb{R}^{N \times C} $ contains the labels for the $ N $ nodes, where each node is classified into one of $ C $ classes. Thus, the symmetric normalized adjacency matrix is $\tilde{A}=(D+I)^{-\frac{1}{2}} (A+I) (D+I)^{-\frac{1}{2}}$ and the normalized Laplacian matrix is $\tilde{L}=I-\tilde{A}$. 
In this work, we explore the task of node classification in an unsupervised setting, where both the node feature matrix $X$ and the graph structure $A$ are given before learning. Now we can define source graph $G^S=\left\{\mathcal{V}^S, \mathcal{E}^S, A^S, X^S, Y^S\right\}$ and target graph $G^T=\left\{\mathcal{V}^T, \mathcal{E}^T, A^T, X^T\right\}$.  

\textbf{Local node homophily ratio} is a widely used metric for quantifying homophilic and heterophilic patterns. It is defined as the proportion of a node's neighbors that share the same label as the node~\cite{MiaoZWL0024}.
It is formally defined as 
\begin{equation}
H_{\mathrm{node}}^v=\frac{\left|\left\{u \mid u \in N_v, y_u=y_v\right\}\right|}{\left|N_v\right|}
\end{equation}
where where $N_v$ denotes the set of one-hop neighbors of node $v$ and $y_i$ is the node $i$ label. 

\begin{figure}[t]
    \centering
    \subfigure[WebKB]{
        \includegraphics[width=.47\linewidth]{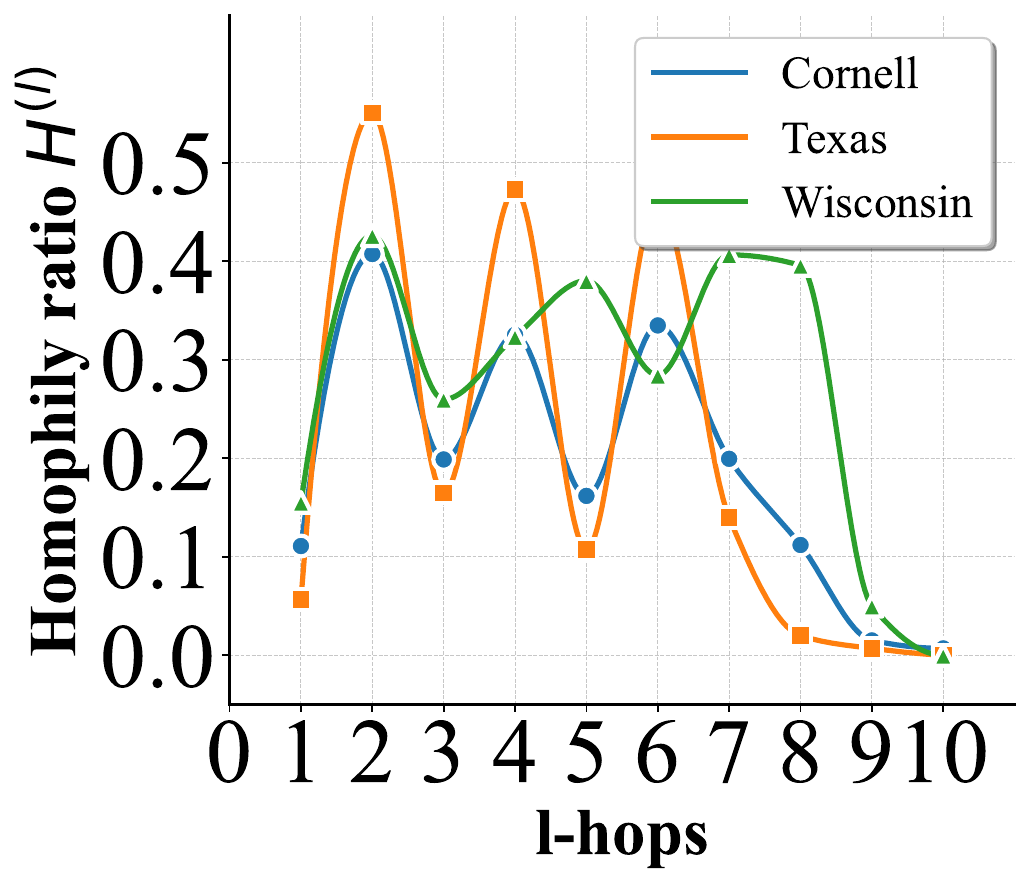}
    }
    \subfigure[Airport]{
        \includegraphics[width=.47\linewidth]{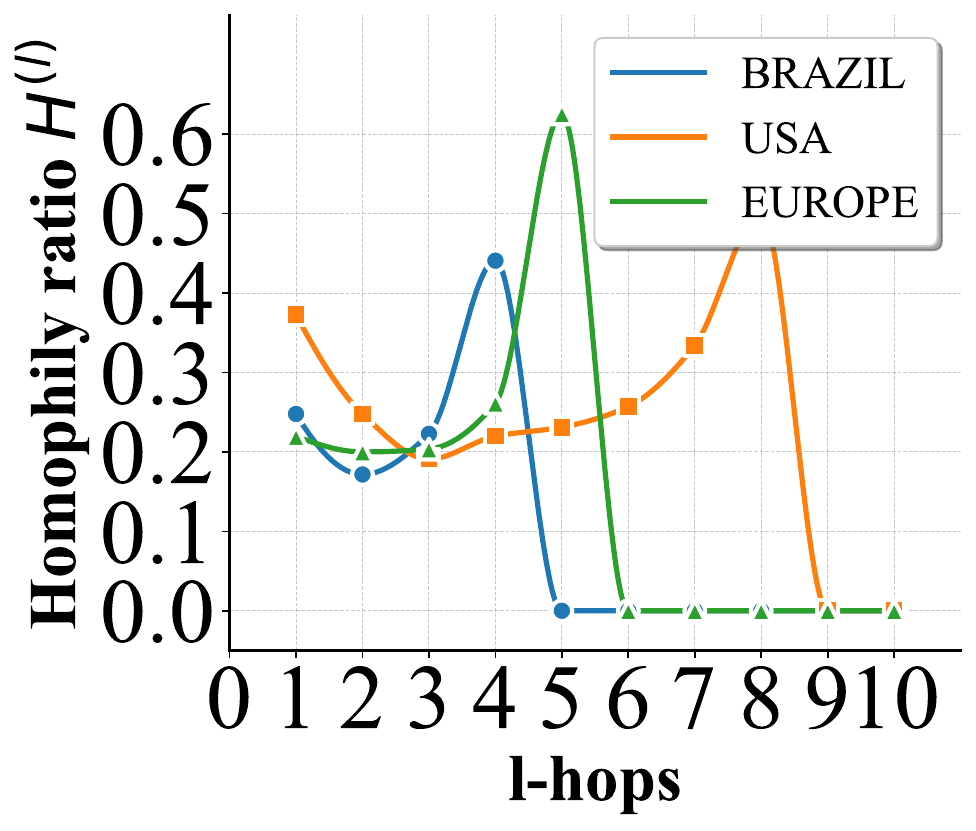}
    }
    \caption{This show the changes of homophily ratio in homophilic and heterophilic graphs with different hops. Particularly, the nodes in different hops sharing different neighbors have various homophily ratios.}
    \label{2hop1}
\end{figure}

\subsection{Structure Reconstruction} \label{GC}
	
In practice, we can't know whether a given graph is homophilic or not in unsupervised GDA tasks. Hence, separately developing homophilic and heterophilic methods is unrealistic. Moreover, real graphs always contain both homophilic and heterophilic nodes. To have a holistic model, we develop a structure reconstruction approach. Specifically, we construct a heterophilic graph and a homophilic graph from the original graph.

\subsubsection{Homophilic Graph Construction}
In practice, even the homophilic graph doesn't have a homophily score of $1$, i.e., there exist some heterophilic nodes in the homophilic graph. Thus, we could further improve the homophily level of graph by minimizing the distances among adjacent nodes, which is formulated as:
$$
\min _{\hat{A}_{i:}} \sum_{j=1}^N \hat{A}_{i j}\left\|x_i-x_j\right\|^2,
$$
where $\hat{A}_{i:}$ represents the $i$-th row of $\hat{X}$. To avoid the trivial solution $\hat{X}=I$, we rewrite the above equation as: 
$$
\min _{ \hat{A}_{i:}} \sum_{j=1}^N \hat{A}_{i j}\left\|x_i-x_j\right\|^2+\hat{A}_{i j}^2.
$$
It’s clear that the graph $\hat{A}$ will be more homophilic when edges are defined by nodes sharing high similarity~\cite{pan2023beyond,xie2025one}. As shown in Fig.~\ref{2hop1}, the homophily ratio $H^{(l)}$ varies significantly across different hop levels, suggesting the need to extract structural information at multiple scales. The homophily ratio is defined as $H^{(l)}(G) = \frac{1}{|A_{ij}^{(l)}|} \sum_{A_{ij}^{(l)} > 0} \mathds{1}(y_i = y_j)$, where $l$ denotes the number of hops and the exponent of the adjacency matrix $A$ \cite{HM}. Consequently, the message propagation path in the original ${A}$ could be incorrect when homophily ratio changes of a node are dissimilar due to different hop neighbors. Furthermore, using 1-hop structural information may limit the effectiveness of message propagation. Based on above observation, we empirically design a term to integrate the various hops neighbor relation, i.e., we enforce that all $l$-hop neighboring nodes are in the set of 1-hop neighborhood. Let $\|x_i - x_j\|^2=F_{ij}$ as feature distance, then we can construct a homophilic graph $\hat{A}$ by solving the following optimization problem:
    \begin{equation} \label{pr2}
    \begin{gathered}
    \min _{\hat{A}_{ij}} \hat{A}_{i j} F_{ij}+\hat{A}_{i j}^2+\left(\hat{A}_{i j}^{(l)}-\hat{A}_{i j}\right)^2, \\
    \text { s.t. } \hat{A}_{i j}>0, \sum_{j=1}^N \hat{A}_{i j}=1,
    \end{gathered}
    \end{equation}
where $\hat{A}^{(l)}$ is the l-hop graph, i.e., $\hat{A}^{(2)}=\hat{A}\times \hat{A}$.
\paragraph{Construction Optimization}
In practice, we use $l = 2$ as a representative example, though $l$ can be set to other values by repeating the same procedure. We begin by initializing $\hat{A}$ with $A$, and then reformulate problem (\ref{pr2}) using its corresponding Lagrangian function:
	\begin{equation} \label{hgl_lagrangian}
		\begin{aligned}
			\underset{\hat{A}_{i:}}{\operatorname{min}}\sum_{j=1}^N[&\hat{A}_{ij}F_{ij}+(\hat{A}_{ij}^{(2)}-\hat{A}_{ij})^2+\hat{A}_{ij}^2]\\
			&-\sum _{j=1}^N\lambda_{1j}\hat{A}_{ij}-\lambda_{2i}(\sum_{j=1}^N\hat{A}_{ij}-1).
		\end{aligned}
	\end{equation}
The derivative of Eq. (\ref{hgl_lagrangian}) w.r.t. $\hat{A}_{ij}$ is 
	\begin{equation} \label{hgl_derivative}
		\begin{aligned}
			&F_{ij}+2(\hat{A}_{ii}+\hat{A}_{jj}-1)(\hat{A}_{ij}^{(2)}-\hat{A}_{ij})+2\hat{A}_{ij}-\lambda_{1j}-\lambda_{2j}\\
			&+\sum_{f\neq j}^N2\hat{A}_{jf}[\hat{A}_{ij}\hat{A}_{jf}+(\hat{A}_{if}^{(2)}-\hat{A}_{ij}\hat{A}_{jf})-\hat{A}_{if}].
		\end{aligned}
	\end{equation}
Remove the self-loop on graph, i.e., let $\hat{A}_{ii}=0$. $\hat{A}$ and $\hat{A}^{(l)}$ can be regarded as constants at the last iteration. By introducing $C_f$, 
	$$C_f=\hat{A}_{if}^{(l)}-\hat{A}_{ij}\hat{A}_{jf}-\hat{A}_{if},i\neq f\\ 
    $$  
we rewrite Eq. (\ref{hgl_derivative}) as:
	\begin{equation} \label{hgl_derivative_simplify}
		F_{ij}-2(\hat{A}_{ij}^{(2)}-\hat{A}_{ij})+2\hat{A}_{ij}-\lambda_{1j}-\lambda_{2i}+\sum_{f\neq j}2\hat{A}_{jf}^2\hat{A}_{ij}+2\hat{A}_{jf}C_f.
	\end{equation} 
	
Based on previous work inference~\cite{pan2023beyond}, to obtain $\hat{A}_{ij}$, we need to compute $\lambda_{2j}$ by solving this optimization problem:
	\begin{equation} \label{hgl_lambda}
		\underset{\lambda_{2j}}{\operatorname{min}} \sum_{j=1}^N  [\frac{2\hat{A}_{ij}^{(2)}+\lambda_{2j}-F_{ij}-2\sum_{f \neq j}\hat{A}_{jf}C_f}{2(2+\sum_{f \neq j}\hat{A}_{jf}^2)}]_+ -1.
	\end{equation}
where $[\bullet]_+$ operator means $\operatorname{max}(\bullet,0)$. 
Equation (\ref{hgl_lambda}) can be solved using either a gradient descent algorithm or by formulating it as a linear programming (LP) problem. In this optimization process, $l$ can serve as a trade-off preprocessing parameter, and the resulting solution yields the source and target homophily graph adjacency matrices, denoted as ${A}_O^S$ and ${A}_O^T$, respectively.

\subsubsection{Heterophilic Graph Construction}
Firstly, we select the nodes that are far away from each other in both feature space and structure space as negative pairs, which prevents us from false negative pairs. Specifically, we use complementary graphs of the similarity graph and the topology graph to construct a heterophily graph. The procedure is formulated as follows:
	\begin{equation}
		\begin{aligned}
			\bar{S} &= 1.-S,\\ 
			\bar{A} &=1.-\tilde{A},\\ 
			H &=\bar{S} \odot \bar{A},
		\end{aligned}
	\end{equation}
The similarity matrix $S$ is computed using the cosine similarity of node features, capturing the closeness between nodes in the feature space. The symbol $\odot$ denotes the Hadamard product, which enables the identification of non-neighboring relationships in both the feature and topology spaces. In homophilic graphs, nodes of the same class are typically adjacent, while non-adjacent nodes (as reflected in the complementary graph) are more likely to belong to different classes. In contrast, heterophilic graphs often connect nodes from different classes, leading to dissimilar adjacent nodes \cite{pan2023beyond}. To construct a heterophilic graph, it is therefore reasonable to select adjacent nodes from both the complementary similarity matrix $\bar{S}$ and the complementary adjacency matrix $\bar{A}$. The resulting reconstructed graph $H$ may be dense; thus, we retain only the top five edges per node, corresponding to the five most dissimilar nodes. In this manner, we obtain the source heterophilic graph adjacency matrix $A_E^S$ and the target heterophilic graph adjacency matrix $A_E^T$.
	
\subsection{Graph Filtering}
 
Based on the assumption that the graph signal should be smooth, i.e., the neighbor nodes tend to be similar, a low-pass filter has been used to obtain smoothed representations\cite{MCGC}. 
One typical low-pass filter  \cite{AGC} can be formulated as:
	\begin{equation} \label{high_pass_filter}
		Z = (I-\frac{1}{2}L)^k X,
	\end{equation}
where $Z$ is the filtered representation, $k$ is the order of graph filtering.\\
However, Eq. (\ref{high_pass_filter}) could be ineffective, resulting from its heavy dependence on the raw topological graph, which could be noisy and incomplete. Additionally, low-pass filtering neglects the high-frequency components in data, which leads to information loss and inferior performance. This would be worse for a heterophilic graph, where high-frequency information plays a critical role\cite{MCGC}. Since it is impossible to know whether a given graph is homophilic or not in unsupervised learning, it's necessary to design a generic filter to handle different types of graphs. To this end, we design an adaptive filter for graph data as follows:
	\begin{equation} \label{H_filter}
		Z_E = \gamma (\frac{1}{2}L_E)^k X ,
	\end{equation}
    \begin{equation} \label{L_filter}
    Z_O = (1-\gamma) (I - \frac{1}{2}L_O)^{k} X,
    \end{equation}
where $\gamma$ is a learnable parameter balancing homophilic and heterophilic representations, $L_E = I - A_E$ and $L_O = I - A_O$ are the normalized Laplacian matrices of reconstructed homophilic graph $A_O$ and heterophilic graphs $A_E$. Note that our adaptive filter is not simply combining a low-pass and high-pass filter; we apply newly constructed graphs rather than the raw graph, which often has low quality. Finally, we obtain source representation $Z^S_E$ and $Z^S_O$ and target representation $Z^T_E$ and $Z^T_O$. \\

\subsection{Homophily-agnostic Alignment Network}
In this section, we introduce our proposed homophily-agnostic alignment Network to address the nodes GDA task. As shown in Fig. \ref{framework} (a), RSGDA contains two unshared encoders and a decoder, and all of them are MLPs. Different from previous methods, which learn and align representations in only one space, two encoders $E_{\theta_E}$ and $E_{\theta_O}$  are utilized to map filtered features and structural graph into homophilic signals and heterophilic signals, respectively. The encoder is applied to preserve different levels of original homophilic information. The obtained representations are:
	\begin{equation}
		\begin{aligned}
			Z_E = E_{\theta_E}(Z_E),\\
			Z_O = E_{\theta_O}(Z_O).
		\end{aligned}
	\end{equation} 
Moreover, to alleviate representation collapse, i.e., representations of all nodes tend to be the same, we add a correlation reduction item to prevent it \cite{BarlowTwins}.
\begin{equation}
    \begin{aligned}
        \mathcal{L}_{CR} =\frac{1}{d^2} \sum_{i=1}^d\left(\mathbf{K}_{i i}-1\right)^2+\frac{1}{d^2-d} \sum_{i=1}^d \sum_{j \neq i}\mathbf{K}_{i j}^2,
    \end{aligned}
\end{equation}
where $d$ is the dimension of node attribute, $M$ is the similarity of corresponding nodes in two encoders, $$K_{ij}=\frac{Z_E{}_{i}^{\top}Z_O{}_{j}}{\|Z_E{}_{i}\|\cdot\|Z_O{}_{j}\|}.$$ Afterwards, $Z_E$ and $Z_O$ are concatenated as $Z$.

Decoder is employed to obtain reconstructed features $\bar{Z}$. We adopt the Scaled Cosine Error as the objective of reconstruction \cite{GraphMAE}, which can down-weight easy samples’ contribution by controlling the sharpening parameter $\beta$ in training:
\begin{equation}
    \mathcal{L}_{\mathrm{RE}}=\sum^N_{i=1 }\left(1-\frac{Z_i^\top \bar{Z}_i}{\left\|Z_i\right\| \cdot\left\|\bar{Z}_i\right\|}\right)^\beta,  
\beta \geq 1.
\end{equation}

After the above steps, source and target are processed separately, we finally obtain $Z_{E}^S $, $ Z_{O}^S$ and the target graph $ Z_{E}^T$, $ Z_{O}^T$ as their heterophily and homophily representation. 
To this end, $\mathcal{L_A}$ utilizes the KL divergence loss between the source graph embeddings $ Z_{E}^S $, $ Z_{O}^S$ and the target graph $ Z_{E}^T $, $ Z_{O}^T $ and the target graph embeddings $ Z_{H}^T $, which can be formulated as:
\begin{equation}
\mathcal{L_A} = {KL}\big(Z_{E}^S \| Z_{E}^T\big) + {KL}\big(Z_{O}^S \| Z_{O}^T\big),
\end{equation}
In summary, the objective of RSGDA can be computed by:
\begin{equation}
    \mathcal{L} = \mathcal{L}_{CR} + \mu_1\mathcal{L}_{RE} + \mu_2\mathcal{L}_{A}.
\end{equation}
where $\mu_1$ and $\mu_2$ are trade-off hyper-parameters. The parameters of the framework are updated via backpropagation. A detailed description of our algorithm is provided in supplementary materials.

\subsection{Theoretical Analysis of RSGDA}

To further illustrate how RSGDA reduces the representation gap between domains, we examine the effect of the mixed graph filters on the latent feature distributions.  Specifically, we show that under appropriate spectral alignment, the outputs of any hypothesis $h \in \mathcal{H}$ exhibit negligible difference between the source and target domains when evaluated on the filtered and encoded representations. 

\begin{theorem}
    By using mixed graph filter for source and target graphs, for all input $z$, we have $|\mathbb{E}_{z\sim P_S^O}[h(z)]-\mathbb{E}_{z\sim P_T^O}[h(z)]| < \epsilon$, where $\epsilon$ denotes a small positive constant that can be made arbitrarily small as the optimization proceeds.
\end{theorem}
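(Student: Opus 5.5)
The plan is to make precise what "spectral alignment" means and then push the discrepancy through the pipeline filter $\to$ encoder $\to$ hypothesis using Lipschitz bounds. We read $P_S^O$ and $P_T^O$ as the empirical distributions of the encoded homophilic representations $E_{\theta_O}(Z_O^S)$ and $E_{\theta_O}(Z_O^T)$. Here
$$Z_O^S=(1-\gamma)(I-\tfrac12 L_O^S)^k X^S, \qquad Z_O^T=(1-\gamma)(I-\tfrac12 L_O^T)^k X^T.$$
We assume every $h\in\mathcal{H}$ is $C_h$-Lipschitz and bounded, and the MLP encoder is $C_E$-Lipschitz, as is standard for ReLU MLPs with bounded weights. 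The same argument applies verbatim to the heterophilic branch $Z_E=\gamma(\tfrac12 L_E)^kX$.

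\textbf{Step 1 (reduction to a transport distance).} By Kantorovich--Rubinstein duality,
$$|\mathbb{E}_{P_S^O}h-\mathbb{E}_{P_T^O}h|\le C_h\, W_1(P_S^O,P_T^O)\le C_hC_E\, W_1\big(\mathcal{L}(Z_O^S),\mathcal{L}(Z_O^T)\big).$$
So it suffices to bound the $W_1$ distance between the filtered-feature distributions.

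\textbf{Step 2 (spectral decomposition of the filter).} Since $\hat A$ is row-stochastic with nonnegative entries, the eigenvalues of $A_O$ lie in $[-1,1]$. The filter $g(\lambda)=(1-\tfrac12\lambda)^k$ on the Laplacian spectrum $[0,2]$ therefore has $|g|\le1$ and $|g'|\le k/2$. Writing $L_O^{S}=U_S\Lambda_SU_S^\top$ and similarly for $T$, the filtered features are $U_S g(\Lambda_S)U_S^\top X^S$. Using the telescoping identity
$$A^k-B^k=\sum_{m} A^m(A-B)B^{k-1-m},$$
we obtain
$$\big\|(I-\tfrac12L_O^S)^k-(I-\tfrac12L_O^T)^k\big\|\le \tfrac{k}{2}\,\|L_O^S-L_O^T\|.$$
Under spectral alignment this term is controlled by the spectral gap. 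Spectral alignment here means a node correspondence or coupling such that $\|L_O^S-L_O^T\|\le\delta_L$, or, for unequal sizes, that the spectral measures are close with matched eigenvectors. This gives a bound of order $(1-\gamma)\big(\tfrac{k}{2}\delta_L\|X\|+\delta_X\big)$, where $\delta_X$ is the feature discrepancy under the coupling. The step depends on the reconstruction: because $A_O$ is built only from feature distances $F_{ij}$ and a structure-independent regularizer, source and target graphs with similar feature geometry produce similar $A_O$. This is the justification that $\delta_L$ is small.

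\textbf{Step 3 (optimization drives the remaining term to zero).} In the encoded space, $\mathcal{L}_A$ contains $KL(Z_O^S\|Z_O^T)$. By Pinsker's inequality, total variation satisfies $TV\le\sqrt{KL/2}$. Since $h$ is bounded by $B$, the gap in the theorem is at most $2B\sqrt{\tfrac12 KL}$. This route bypasses Steps 1 and 2: as optimization proceeds, $\mathcal{L}_A\to$ small, so the bound tends to $0$, giving the arbitrary $\epsilon$ in the statement. I would present the final bound as the minimum of the Pinsker bound and the spectral bound, with $\epsilon=\min\{2B\sqrt{\mathcal{L}_A/2},\,C_hC_E(1-\gamma)(\tfrac k2\delta_L\|X\|+\delta_X)\}$.

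\textbf{Main obstacle.} The hard step is making Step 3 rigorous. The KL in $\mathcal{L}_A$ is taken between embedding matrices, presumably softmax-normalized rows, rather than between genuine distributions $P_S^O$ and $P_T^O$. Moreover, "arbitrarily small" requires assuming the optimizer actually attains near-zero alignment loss while $\mathcal{L}_{CR}$ and $\mathcal{L}_{RE}$ are also minimized. I would handle this by explicitly assuming realizability: there exist encoder parameters achieving $\mathcal{L}_A\le\eta$ for any $\eta>0$, and the KL is interpreted as a KL between the induced distributions of normalized embeddings. Under that assumption, Pinsker closes the argument.
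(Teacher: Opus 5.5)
Your Step 3 is the paper's proof. The paper also reads $\mathcal{L}_A$ as $KL(P_E^S\|P_E^T)+KL(P_O^S\|P_O^T)$, uses nonnegativity of each KL term and Pinsker's inequality to get $\|P_O^S-P_O^T\|_{TV}\to 0$ as $\mathcal{L}_A\to 0$, and bounds the expectation gap by the TV distance; your explicit boundedness of $h$ (the paper instead cites the Lipschitz encoder, which plays no role in that step), the sharper Pinsker constant, and the realizability assumption supply exactly what the paper leaves implicit.

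Your Steps 1--2 are not part of the paper's argument and are not needed, since that bound is governed by $\delta_L,\delta_X$, which are fixed by the data and the preprocessing reconstruction rather than driven down by training. If you keep them, note that $A_O$ is row-stochastic but not necessarily symmetric, so the orthogonal eigendecomposition and ``eigenvalues in $[-1,1]$'' are unjustified; the telescoping bound still yields the factor $\tfrac{k}{2}$ in the induced $\infty$-norm, because $\|\tfrac{1}{2}(I+A_O)\|_\infty=1$.
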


This result implies that our filter-based spectral normalization and structural alignment promote semantic consistency by minimizing distributional mismatch in the latent space via low-pass alignment, enabling any classifier $h$ to generalize across domains. As optimization proceeds, the bound $\epsilon$ tightens, validating the effectiveness of structural reconstruction and spectral filtering for unsupervised GDA.

\begin{definition}[Structural Difference]
We define the structural difference between graph $G_s$ and $G_t$ as 
$$
S(G_S,G_T)=\|L_O^S-L_O^T\|_2 + \|L_E^S-L_E^T\|_2.
$$ 
\end{definition}


\begin{theorem}[Domain Adaptation Bound for Graph Reconstruction]
Combining the empirical–source bound, the domain adaptation bound, and the divergence bound, we obtain that with probability at least $1-\delta$, for all $h\in\mathcal{H}$, we have
\begin{equation}
\begin{aligned}
\mathcal{R}_T(h)&
\;\le\;
\hat{\mathcal{R}}_S(h)
\;+\;
2\,\mathfrak{R}_n(\mathcal{H})
\;+\;
\sqrt{\frac{\ln(1/\delta)}{2n}}
\;\\
& +\;
C\,S(G_S,G_T)
\;+\;
\lambda,
\end{aligned}
\end{equation}
which is precisely inequality with a constant $C$. 
\end{theorem}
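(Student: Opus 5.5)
The bound will be obtained by chaining three standard ingredients, as the statement itself suggests: a Rademacher generalization bound on the source, a Ben-David-style domain adaptation bound, and a bound on the divergence term by the structural difference $S(G_S,G_T)$.

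\textbf{Empirical-source bound.} Take a loss bounded in $[0,1]$ and $n$ i.i.d. labeled source samples. The usual symmetrization plus McDiarmid argument gives, with probability at least $1-\delta$ and uniformly over $h\in\mathcal{H}$,
\[
\mathcal{R}_S(h)\le \hat{\mathcal{R}}_S(h)+2\,\mathfrak{R}_n(\mathcal{H})+\sqrt{\ln(1/\delta)/(2n)}.
\]

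\textbf{Domain adaptation bound.} Next I will use the form $\mathcal{R}_T(h)\le \mathcal{R}_S(h)+d_{\mathcal{H}}(P_S,P_T)+\lambda$. Here $\lambda=\min_{h'}(\mathcal{R}_S(h')+\mathcal{R}_T(h'))$ is the joint optimal risk, and $d_{\mathcal{H}}$ is an integral probability metric $\sup_{h}|\mathbb{E}_{P_S}h-\mathbb{E}_{P_T}h|$ over the induced loss class. The distributions are those of the filtered and encoded representations $Z=[E_{\theta_E}(Z_E),E_{\theta_O}(Z_O)]$. The proof is the triangle inequality through $h'$, followed by bounding the cross term $|\mathcal{R}_S(h,h')-\mathcal{R}_T(h,h')|$ by the IPM.

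\textbf{Divergence bound.} This step is the main obstacle. The argument has to control the difference in expectations by the distance between the representations themselves. I will assume the losses $\ell\circ h$ are $L_h$-Lipschitz, the encoders are $L_E$-Lipschitz, and $\|X\|\le B$. I will also need a coupling of source and target inputs, i.e.\ a shared feature matrix $X$ or aligned node correspondence. The difference in expectations is then bounded by $L_hL_E$ times the norm of the filter difference.

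For the homophilic branch, the filter difference is $\|(I-\tfrac12 L_O^S)^k-(I-\tfrac12L_O^T)^k\|_2\,B$. Telescoping $P^k-Q^k=\sum_{i}P^{i}(P-Q)Q^{k-1-i}$ and using $\|I-\tfrac12 L\|_2\le1$ (since the spectrum of $L$ lies in $[0,2]$) gives at most $\tfrac{k}{2}\|L_O^S-L_O^T\|_2$.

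The heterophilic branch works the same way with $\|\tfrac12 L_E\|_2\le 1$, giving $\tfrac{k}{2}\|L_E^S-L_E^T\|_2$. The factors $\gamma,1-\gamma\le1$ are absorbed into the constant.

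Summing the two branches gives $d_{\mathcal{H}}\le C\,S(G_S,G_T)$ with $C=kL_hL_EB$. The delicate points are exactly the node-correspondence assumption and the spectral-norm bounds on the reconstructed Laplacians. For the latter I will invoke the normalization $\sum_j\hat A_{ij}=1$ and the nonnegativity constraint from problem~(\ref{pr2}), so that the spectra stay in $[0,2]$. If needed, I will state these as explicit hypotheses.

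Finally, substituting the divergence bound and then the source bound into the domain adaptation bound yields the claimed inequality with probability at least $1-\delta$.
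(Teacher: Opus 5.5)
Your outline uses the same three ingredients as the paper. Your telescoping bound on $(I-\tfrac12 L^S)^k-(I-\tfrac12 L^T)^k$ is the correct version of the paper's Fr\'echet-derivative step: the paper's derivative formula presumes the matrices commute, and its $k/2^k$ constant on the high-pass path presumes $\|L_E\|_2\le 1$.

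The genuine difference is how you pass from representations to the divergence. The paper bounds $d_{\mathcal{H}}$ by $2\|P^O_S-P^O_T\|_{TV}+2\|P^E_S-P^E_T\|_{TV}$. It then invokes an unspecified ``distributional stability'' result to bound each TV distance by a constant times $\|Z^S-Z^T\|_F$. That inequality is false in general, because TV is not continuous under small displacements: two empirical node distributions whose atoms are moved by any nonzero amount are at TV distance $1$. The paper's route therefore effectively assumes what it needs.

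You instead use an IPM over Lipschitz test functions evaluated along an explicit coupling. There $|\mathbb{E}f(Z^S)-\mathbb{E}f(Z^T)|\le \mathrm{Lip}(f)\,\mathbb{E}\|Z^S-Z^T\|$ holds unconditionally, so the linear dependence on $S(G_S,G_T)$ becomes genuinely provable. The price is leaving the paper's 0-1 setting:
\begin{itemize}
\item $\mathcal{R}_S,\mathcal{R}_T,\lambda,\mathfrak{R}_n$ must be taken for a Lipschitz loss obeying the triangle inequality, e.g.\ $|h-h'|$ for real-valued $h$.
\item The IPM must range over the disagreement functions $z\mapsto \ell(h(z),h'(z))$, $h,h'\in\mathcal{H}$, not over the labelled loss class. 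That is what the cross term in your triangle-inequality argument actually contains.
\end{itemize}
The shared-$X$/node-correspondence coupling you flag is also used silently by the paper. Both $P^O_S$ and $P^O_T$ are written with the same $X$, and $\|L_O^S-L_O^T\|_2$ only makes sense for equal-size graphs. Stating it explicitly is an improvement.

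One step of your plan does not go through as justified. The constraints $\hat A_{ij}>0$, $\sum_j \hat A_{ij}=1$ from problem~(\ref{pr2}) do not yield $\|I-\tfrac12 L_O\|_2\le 1$. Telescoping needs operator-norm bounds, not eigenvalue locations. A row-stochastic $\hat A$ is generally non-symmetric: its eigenvalues lie in the unit disk and need not be real. Meanwhile $\|\hat A\|_2$ can be of order $\sqrt{N}$, for instance when every row puts almost all its mass on one hub node. Similarly, nothing in the top-five construction of $A_E$ bounds $\|\tfrac12 L_E\|_2$.

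So you genuinely need the hypothesis that $A_O$ and $A_E$ are symmetrically normalized to get $C=kL_hL_EB$. This matches how the paper's $\tilde A$ is defined and what the phrase ``normalized Laplacian'' suggests. Without it you only get the stated form with an $N$-dependent constant. Powers of row-stochastic matrices satisfy $\|\cdot\|_2\le\sqrt{N}$, which gives roughly $kN/2$ in place of $k/2$.
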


This bound characterizes the generalization ability of RSGDA in unsupervised GDA, where the first three terms reflect source estimation error, and the last term $S(G_s, G_t)$ measures spectral discrepancies between reconstructed homophilic and heterophilic structures. By minimizing $S(G_s, G_t)$ through homophily-agnostic reconstruction and spectral filtering, RSGDA effectively reduces the target risk upper bound.


\begin{table}[t]
\centering
\small
\scalebox{0.9}{\begin{tabular}{c|c|c|c|c}
\toprule[0.8pt]
Datasets        & \#Node  & \#Edge     & \#Homophily Ratio          & \#Label              \\ 
\midrule[0.8pt]
USA (U)      & 1,190 & 13,599 & 0.6978 & \multirow{3}{*}{4} \\
Brazil (B) & 131 & 1,038   &        0.4683             &                    \\
Europe  (E)    & 399 & 5,995      &   0.4048             &                    \\ 
\midrule[0.8pt]
ACMv9  (A)     & 9,360 & 15,556  &0.7798 & \multirow{3}{*}{5} \\
Citationv1 (C)  & 8,935 & 15,098    & 0.8598                    &                    \\
DBLPv7 (D)     & 5,484 & 8,117    &     0.8198                 &                    \\ 
\midrule[0.8pt]
ACM3 & 3,025 & 2,221,699 & 0.1034 &\multirow{2}{*}{3} \\
ACM4 & 4019 & 57,853   & 0.8391\\

\midrule[0.8pt]
Blog1 (B1)    & 2,300 & 33,471  & 0.3991 & \multirow{2}{*}{6} \\
Blog2 (B2)   & 2,896 & 53,836    & 0.4002                    &                    \\ 
\midrule[0.8pt]
Texas (TX)      & 183 & 325   &0.0614    & \multirow{3}{*}{5} \\
Cornell (CO)      &  183 &  298     &0.1220                  &                    \\
Wisconsin (WI)      & 251 & 515    &0.1703                     &                    \\ 
\midrule[0.8pt]
\end{tabular}}
\caption{Dataset Statistics.}
\label{tab:datasets}
\end{table}

\begin{table*}[!t]
\centering
\small
\scalebox{0.95}{\begin{tabular}{l|cccccc|cc|cc}
\toprule
Methods & U $\rightarrow$ B & U $\rightarrow$ E & B $\rightarrow$ U & B $\rightarrow$ E & E $\rightarrow$ U & E $\rightarrow$ B & A3 $\rightarrow$ A4 & A4 $\rightarrow$ A3 &  B1 $\rightarrow$ B2 & B2 $\rightarrow$ B1\\\midrule
GCN & 0.366 &	0.371 &	0.491 &	0.452 &	0.439 &	0.298 &	0.373 &	0.323 & 0.408 &	0.451 \\
DANN & 0.501 &	0.386 &	0.402 &	0.350 &	0.436 &	0.538  &0.362 &	0.325 & 0.409 &0.419 \\\midrule
DANE& 0.531 &	0.472 &	0.491 &	0.489 &	0.461 &	0.520 &	0.392 &	0.404 & 0.464 &	0.423\\ 
UDAGCN & 0.607 &	0.488 &	0.497 &	0.510 &	0.434 &	0.477 &	0.404 &	0.380 & 0.471 &	0.468  \\
ASN & 0.519 &	0.469 &	0.498 &	0.494 &	0.466 &	0.595 &0.418 &	0.409& 0.632 &0.524\\
EGI & 0.523 &0.451 &	0.417 &	0.454 &	0.452 &	0.588 &0.511 &	0.449 & 0.494 & 0.516 \\ 
GRADE-N & 0.550 &	0.457 &	0.497 &	0.506 &	0.463 &	0.588 &	0.449 &	0.461 & 0.567 &	0.541 \\
JHGDA  & {0.695} &0.519 &	0.511 &	{0.569} &	0.522 &	\underline{0.740} &	0.516 &	{0.537} & 0.619 &0.643\\
SpecReg & 0.481 &0.487 &0.513 &	{0.546} &	0.436 &	0.527 &0.526 &	0.518 &0.661 &0.631  \\
PA      & 0.621   &0.547    &0.543 &	0.516    &	0.506  &0.670 &0.619 &	0.610 & {{0.662}} &	{{0.654}}\\
{HGDA}   &\underline {0.721} &\underline {0.572} &\underline {0.569} &\underline {0.584} &\underline{0.570} & 0.721 &\underline{0.718} &\underline {0.698} &\underline {0.683} &\underline {0.677} \\\midrule
\textbf{RSGDA} & \bf{0.744} &\bf{0.591} &	\bf{0.571} &	\bf{0.609} &	\bf{0.581} &	\bf{0.740} &	\bf{0.726} &	\bf{0.732} & \bf{0.709} &\bf{0.693}\\

\bottomrule
\end{tabular}}
\caption{Cross-network node classification on the Airport, ACM and Blog network.}
\label{tab:airport_classification}
\end{table*}

\begin{table*}[!t]
\centering
\small
\scalebox{0.95}{\begin{tabular}{l|cccccc|cccc}
\toprule
Methods & A $\rightarrow$ D & D $\rightarrow$ A & A $\rightarrow$ C & C $\rightarrow$ A & C $\rightarrow$ D & D $\rightarrow$ C & CO $\rightarrow$ WI & TX $\rightarrow$ CO & TX $\rightarrow$ WI & WI $\rightarrow$ TX  \\\midrule
GCN & 0.632 &	0.578 &	0.675 &	0.635 &	0.666 &	0.654  &0.218 &0.384 &0.218 &	0.308 \\
DANN & 0.488 &0.436 &0.520 &	0.518 &	0.518 &	0.465  &0.242 &	0.318 &	0.277 &	0.341  \\\midrule
DANE& 0.664 &	0.619 &	0.642 &	0.653 &	0.661 &	0.709  &0.271 &	0.243 &	0.281 &	0.374\\ 
UDAGCN  & 0.684 &	0.623 &	0.728 &	0.663 &	0.712 &	0.645  &	0.624 &	0.260  &	0.249 &	{0.375}  \\
ASN & 0.729 &0.723 &0.752 &	0.678 &	0.752 &	0.754  &0.351 &	0.378 &	0.218 &	0.401 \\
EGI & 0.647 & 0.557 &0.676 &	0.598 &	0.662 &	0.652 &0.388 &	0.365  &	0.247 &	0.391 \\ 
GRADE-N & 0.701 &	0.660 &	0.736 &	0.687 &	0.722 &	0.687 &	0.348 &	0.364&	0.216 &	\underline{0.415} \\
JHGDA & 0.755 &0.737 &	0.814 &	{0.756} &	0.762 &	0.794 &	0.386 &	{0.407}&	\underline{0.281} &	{0.239} \\
SpecReg & 0.762 &0.654 &0.753 &	0.680 &	{0.768} &	0.727 &	{0.719} &	{0.214} &	{0.145} &	{0.255}\\
{PA} & {0.752} &	{{0.751}} &	{0.804} &	{0.768} &	{0.755} &	{0.780} & {0.279}     &	0.280 & {0.291}     &	{0.288}  \\ 
{HGDA} & \underline{0.791} &	\underline{0.756} &\underline{0.829} &\underline{0.787} &\underline{0.779} &\underline{0.799}  &	\underline{0.381}  &	\underline{0.303}   &	0.199  &	0.371 \\\midrule
\textbf{RSGDA} & \bf{0.812} &\bf{0.809} &	\bf{0.834} &	\bf{0.797} &	\bf{0.795} &\bf	{0.808} &	\bf{0.416} &	\bf{0.497} & \bf{0.425} &\bf{0.503}\\

\bottomrule
\end{tabular}}
\caption{Cross-network node classification on the Citation and WebKB network.}
\label{tab:citation_classification}
\end{table*}

\section{Experiments}

\subsection{Datasets}
To demonstrate the effectiveness of our approach on domain adaptation for node classification tasks, we evaluate it on four types of datasets: Airport~\cite{ribeiro2017struc2vec}, Citation~\cite{wu2020unsupervised}, Social~\cite{liu2024rethinking}, and WebKB~\cite{peigeom}.
The Airport dataset represents airport networks from three regions: the USA (U), Brazil (B), and Europe (E), where nodes denote airports and edges represent flight routes. The Citation dataset includes three citation networks: DBLPv8 (D), ACMv9 (A), and Citationv2 (C), with nodes as articles and edges as citation links. For the Social domain, we use two blog networks, Blog1 (B1) and Blog2 (B2), both extracted from the BlogCatalog dataset. The WebKB dataset contains widely used heterophilic graphs such as Texas (TX), Cornell (CO), and Wisconsin (WI). To further assess the impact of homophily distribution shifts, we curate a real-world dataset exhibiting significant homophily variation. We utilize two commonly referenced ACM datasets: ACM3 (A3), derived from the ACM Paper-Subject-Paper (PSP) network~\cite{fan2020one2multi}, and ACM4 (A4), extracted from the ACM2 Paper-Author-Paper (PAP) network~\cite{fu2020magnn}. These datasets exhibit inherently different distributions, making them ideal for evaluating domain adaptation performance.

\subsection{Baselines}
We compare RSGDA with representative GDA methods.
{GCN}\cite{kipf2016semi} introduces a first-order approximation of ChebNet.
{DANN}\cite{ganin2016domain} employs a 2-layer MLP and a gradient reversal layer for domain-invariant feature learning.
{DANE}\cite{zhang2019dane} aligns shared latent distributions across networks via adversarial regularization.
{UDAGCN}\cite{wu2020unsupervised} captures both local and global knowledge using dual GCNs under adversarial training.
{ASN}\cite{zhang2021adversarial} extracts domain-invariant features by disentangling domain-specific components.
{EGI}\cite{zhu2021transfer} maximizes ego-graph information to model structural transferability.
{GRADE-N}\cite{wu2023non} measures distribution shifts via graph subtree discrepancies.
{JHGDA}\cite{shi2023improving} leverages hierarchical pooling to exploit multi-level graph structures.
{SpecReg}\cite{you2022graph} integrates optimal transport and graph filters for regularized domain adaptation.
{PA}\cite{liu2024pairwise} mitigates structure and label shift by reweighting edges based on conditional dependence.
{HGDA}~\cite{fanghomophily} separates homophilic and heterophilic signals using distinct spectral filters.

\subsection{Performance Comparison}
The experimental results are reported in Table~\ref{tab:airport_classification} and Table~\ref{tab:citation_classification}, where the best and second-best results are marked in \textbf{bold} and \underline{underlined}, respectively.

Our proposed {RSGDA} consistently achieves state-of-the-art performance across all benchmarks, significantly surpassing prior methods. Specifically, {RSGDA} obtains the highest accuracy in every cross-domain scenario, with substantial improvements in both homophilic and heterophilic settings. For instance, on the challenging heterophilic WebKB datasets, RSGDA outperforms the second-best method by up to $8.6\%$ on WI $\rightarrow$ TX, demonstrating its ability to capture complex structure shifts.
Compared to the recent strong baseline HGDA, RSGDA shows an average gain of $2.30\%$ on Airport, $2.56\%$ on Citation, and $2.11\%$ on Blog datasets. Notably, it also improves upon \textbf{PA}~\citep{liu2024pairwise}, a leading 2024 method, by a large margin. These improvements validate the effectiveness of our spectral graph reconstruction and homophily-agnostic alignment strategy.
In conclusion, the consistent top performance across diverse benchmarks highlights RSGDA’s robustness in handling varying homophily levels and structural discrepancies between source and target graphs. 

\begin{figure}[t]
    \centering
    \includegraphics[width=.95\linewidth]{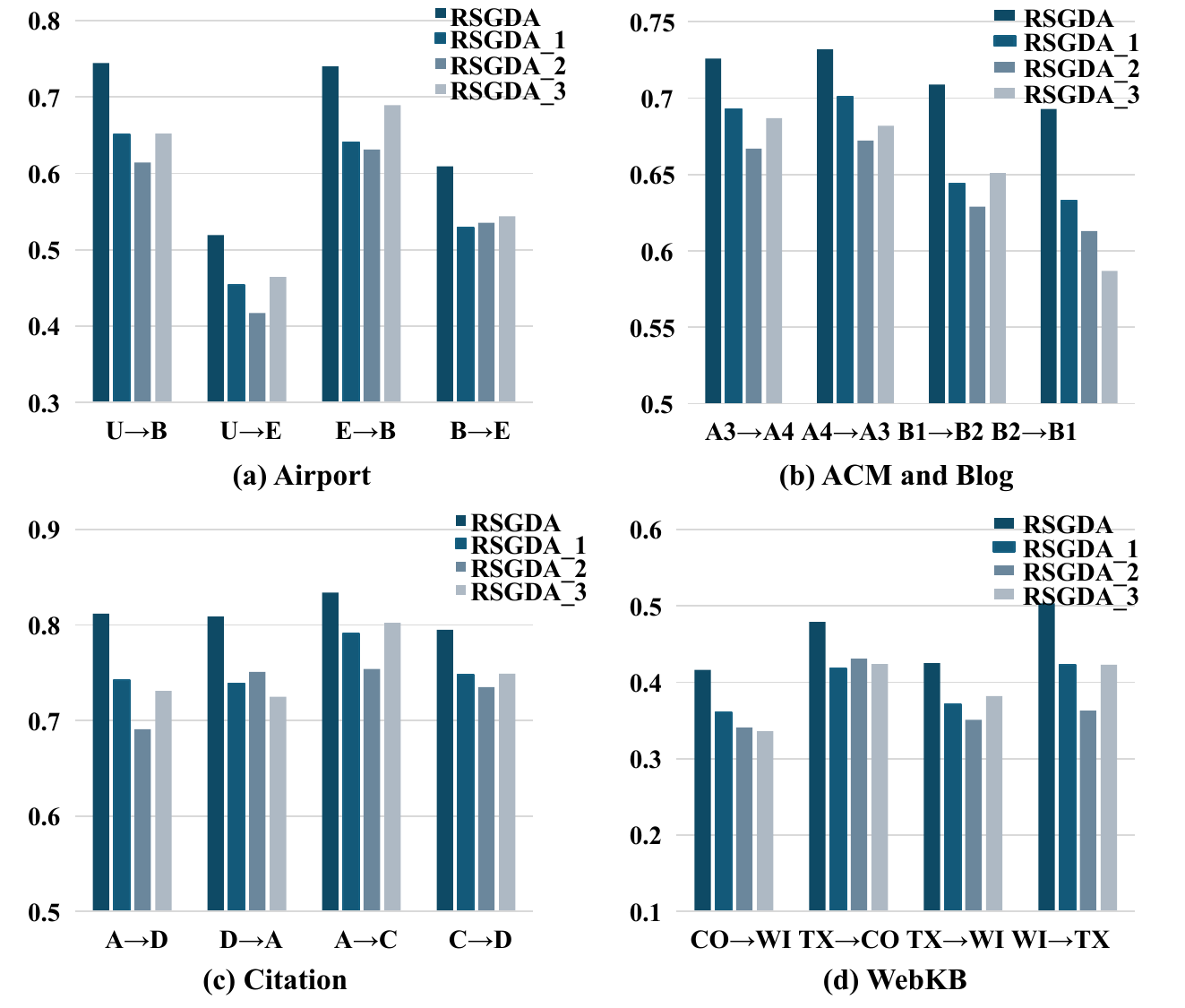}    
    \caption{The classification accuracy of RSGDA and its variants on four benchmark datasets.}
    \label{fig:ablation}
\end{figure}

\subsection{Ablation Study}
To validate the effectiveness of different components in our model, we compare RSGDA with its three ablated variants across five benchmarks, as shown in Figure~\ref{fig:ablation}.
{RSGDA$_1$}: RSGDA without the correlation reduction loss $\mathcal{L}_{CR}$, used to evaluate its role in mitigating representation collapse.
{RSGDA$_2$}: RSGDA without the feature reconstruction loss $\mathcal{L}_{RE}$, used to assess the impact of alleviate representation collapse.
{RSGDA$_3$}: RSGDA is applied to the original graphs, randomly separating them into two graphs without utilizing the proposed structure reconstruction module.

According to Figure~\ref{fig:ablation}, we can draw the following conclusions: (1) RSGDA consistently outperforms all its variants across all domains, demonstrating the overall effectiveness and rationality of the full model. (2) Removing either the correlation reduction loss or the reconstruction loss leads to performance degradation, highlighting the importance of regularization and feature recovery. (3) The most significant performance drop occurs in RSGDA$_3$, indicating that our structure reconstruction mechanism is critical in homophily-agnostic representations and mitigating domain shift.

\begin{figure}[t]
    \centering
    \subfigure[U $\rightarrow$ B]{
        \includegraphics[width=.47\linewidth]{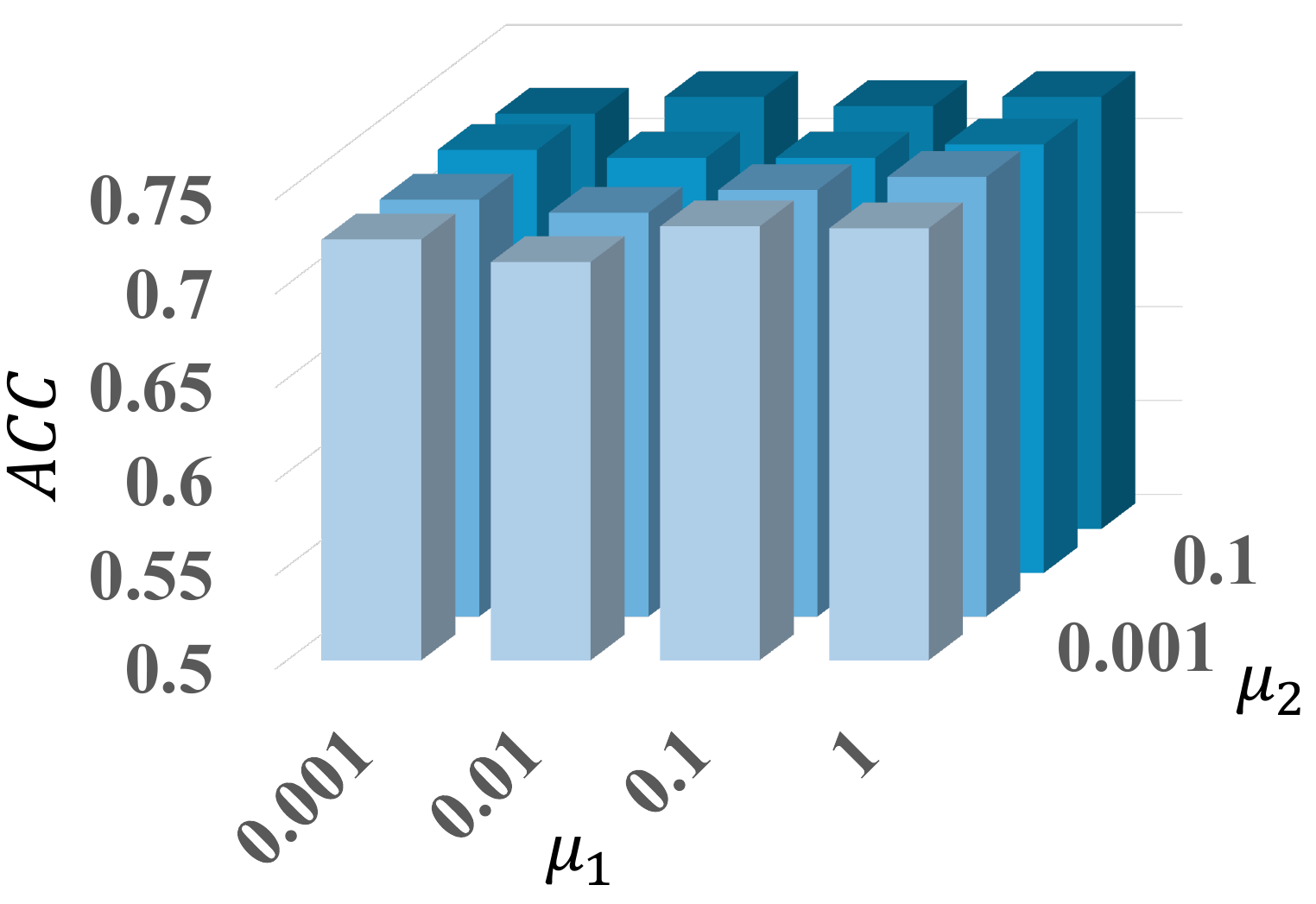}
    }
    \subfigure[TX $\rightarrow$ WI]{
        \includegraphics[width=.47\linewidth]{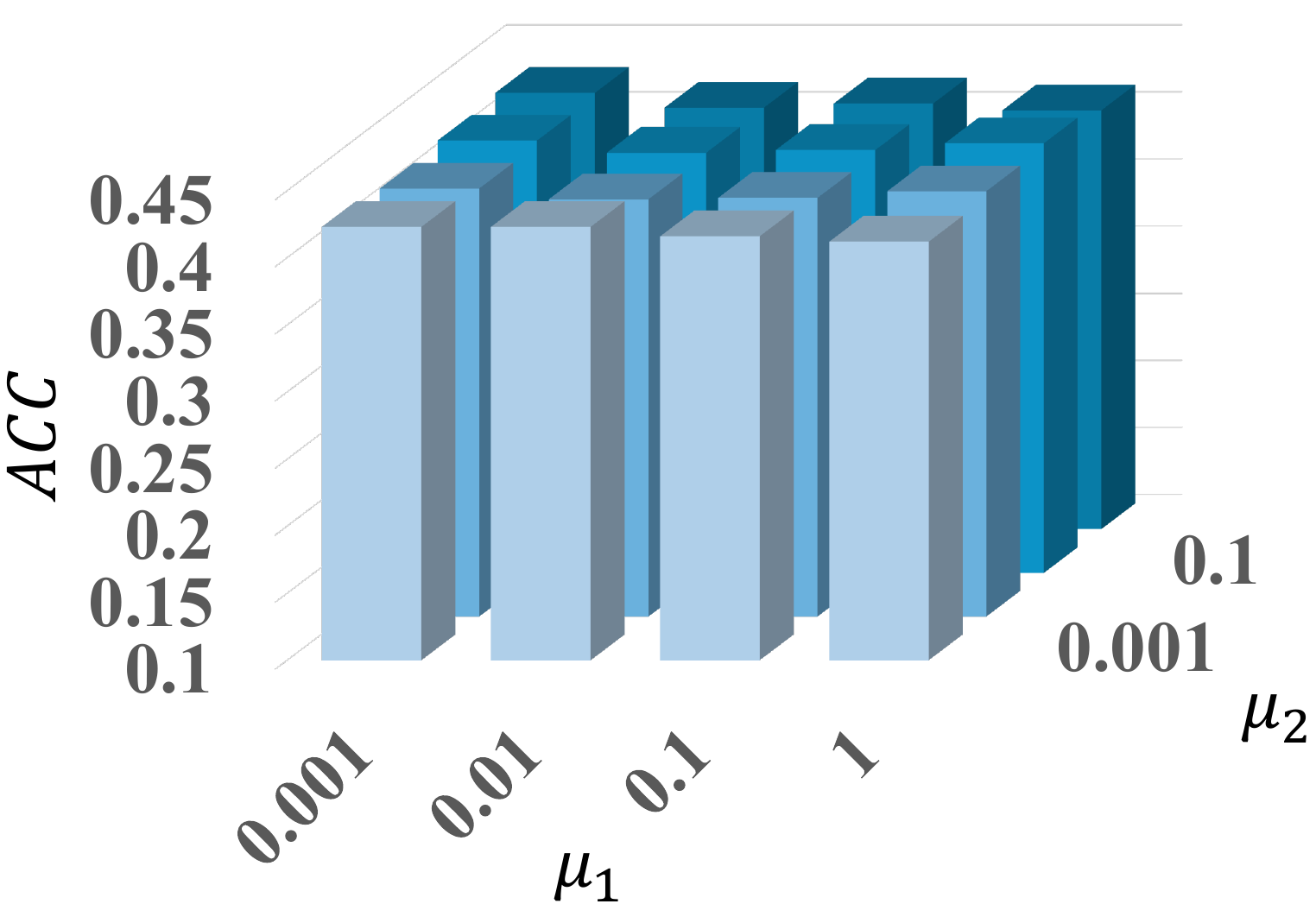}
    }
    \caption{The influence of parameters $\mu_1$ and $\mu_2$ on Airport and WebKB datasets.}
    \label{Parameter Analysis}
\end{figure}

\subsection{Parameter Analysis}
In this section, we analyze the sensitivity of hyper-parameters $\mu_1$ and $\mu_2$ of our method on Airport and WebKB datasets. 
First, we test the performance with different  $\mu_1$ and $\mu_2$. As shown in Figure \ref{Parameter Analysis}, RSGDA has competitive performance on a large range of values, which suggests the stability of our method. For a more detailed analysis and result, refer to the Appendix.

\section{Conclusion}
We propose RSGDA, a homophily-agnostic framework for graph domain adaptation. By reconstructing both homophilic and heterophilic graphs and applying an adaptive filter, RSGDA extracts complementary structural signals and mitigates distribution shift without label supervision. Extensive experiments and theoretical analysis validate its robustness and effectiveness across diverse homophily settings.

\section{Acknowledgements}
This work is supported by the Natural Sciences and Engineering Research Council of Canada (NSERC), Discovery Grants program.

\bibliography{aaai2026}
\maketitle
\newpage
\appendix

\subsection{Experimental Setup}
The experiments are implemented in the PyTorch platform using an Intel(R) Xeon(R) Silver 4210R CPU @ 2.40GHz, and GeForce RTX A5000 24G GPU.
Technically, two layers GCN is built and we train our model by utilizing the Adam~\citep{adam} optimizer with learning rate ranging from 0.0001 to 0.0005. In order to prevent over-fitting, we set the dropout rate to 0.5.
In addition, we set weight decay $\in \left\{1e-4, \cdots, 5e-3 \right\}$ and $l$ $\in \left\{1, \cdots, 10 \right\}$ for $k$NN graph. For fairness, we use the same parameter settings for all the cross-domain node classification methods in our experiment, except for some special cases. For GCN, UDA-GCN, and JHGDA the GCNs of both the source and target networks contain two hidden layers $(L = 2)$ with structure as $128-16$. The dropout rate for each GCN layer is set to $0.3$.
We repeatedly train and test our model for five times with the same partition of dataset and then report the average of ACC.

\section{Model Efficient Experiment}
\textbf{Model Efficient Experiment}: 
To further evaluate the efficiency of RSGDA, Table \ref{tab:time datasets} presents a running time comparison across various algorithms. We also compared the GPU memory usage of common baselines, including UDAGCN and the recent state-of-the-art methods JHGDA, PA, and SpecReg, which align graph domain discrepancies through different ways. As shown in Table, the evaluation results on airport and citation dataset further demonstrate that our method achieves superior performance with tolerable computational and storage overhead~\cite{fang2022structure,fang2026SAGA}.

\begin{table*}[!t]
\centering
\small
\scalebox{0.8}{\begin{tabular}{|c|c|c|c|c|c|}
\toprule[0.8pt]
Dataset   &Method  &Training Time (Normalized w.r.t. UDAGCN)   &Memory Usage (Normalized w.r.t. UDAGCN)    & Accuracy($\%$)         \\ 
\midrule[0.8pt]      
\multirow{6}{*}{U$\rightarrow$B}  & UDAGCN      & 1 & 1   & 0.607 \\
                          & JHGDA     & 1.314 & 1.414    & \underline{0.695} \\
                          & PA      & 0.498 & {0.517}   & 0.481 \\
                          & SpecReg & \underline{0.463}   & \underline{0.493} & 0.621 \\
                          &$RSGDA$ & \bf{0.412} & \bf0.209     & \bf{0.511} \\
\midrule[0.8pt]

\multirow{6}{*}{U$\rightarrow$E}  & UDAGCN      & 1 & 1   & 0.472 \\
                          & JHGDA     & 1.423 & 1.513    & 0.519 \\
                          & PA      & \underline{0.511} & 0.509   & \underline{0.547} \\
                          &SpecReg & \bf0.517 & \underline{0.503}   & 0.487 \\
                          & $RSGDA$ & \bf0.217 & \b{0.213}   & \bf0.472 \\
\midrule[0.8pt]
\multirow{6}{*}{B$\rightarrow$E}  & UDAGCN      & 1 & 1   & 0.497 \\
                          & JHGDA     & 1.311 & 1.501    & \underline{0.569} \\
                          & PA      & 0.502 & \underline{0.497}   & 0.516 \\
                          &SpecReg & \underline{0.407} & 0.503   & {0.546} \\
                          & $RSGDA$ & \bf{0.209} & \bf{0.271}   & \bf{0.484} \\
\midrule[0.8pt]
\multirow{6}{*}{A$\rightarrow$D}  & UDAGCN      & 1 & 1   & 0.510 \\
                          & JHGDA     & 1.311 & 1.501    & 0.569 \\
                          & PA      & 0.502 & \underline{0.497}   & 0.562 \\
                          &SpecReg & \underline{0.407} & 0.503   & \underline{0.536} \\
                          & $RSGDA$ & \bf{0.309} & \bf{0.313}   & \bf{0.791} \\
\midrule[0.8pt]
\multirow{6}{*}{A$\rightarrow$C}  & UDAGCN      & 1 & 1   & 0.510 \\
                          & JHGDA     & 1.311 & 1.501    & 0.569 \\
                          & PA      & 0.502 & \underline{0.497}   & 0.562 \\
                          &SpecReg & \underline{0.407} & 0.503   & \underline{0.536} \\
                          & $RSGDA$ & \bf{0.309} & \bf{0.313}   & \bf{0.326} \\
\midrule[0.8pt]
\multirow{6}{*}{C$\rightarrow$D}  & UDAGCN      & 1 & 1   & 0.510 \\
                          & JHGDA     & 1.311 & 1.501    & 0.570 \\
                          & PA      & 0.502 & \underline{0.497}   & 0.562 \\
                          &SpecReg & \underline{0.407} & 0.503   & \underline{0.536} \\
                          & $RSGDA$ & \bf{0.309} & \bf{0.313}   & \bf{0.326} \\
\midrule[0.8pt]

\end{tabular}}
\caption{Comparison of Training Time, Memory Usage, and Accuracy on Airport datset.}
\label{tab:time datasets}
\end{table*}

\label{Model efficient experiment}

\subsection{Additional Parameter Analysis}

\begin{figure}[t]
    \centering
    \subfigure[Airport]{
        \includegraphics[width=.47\linewidth]{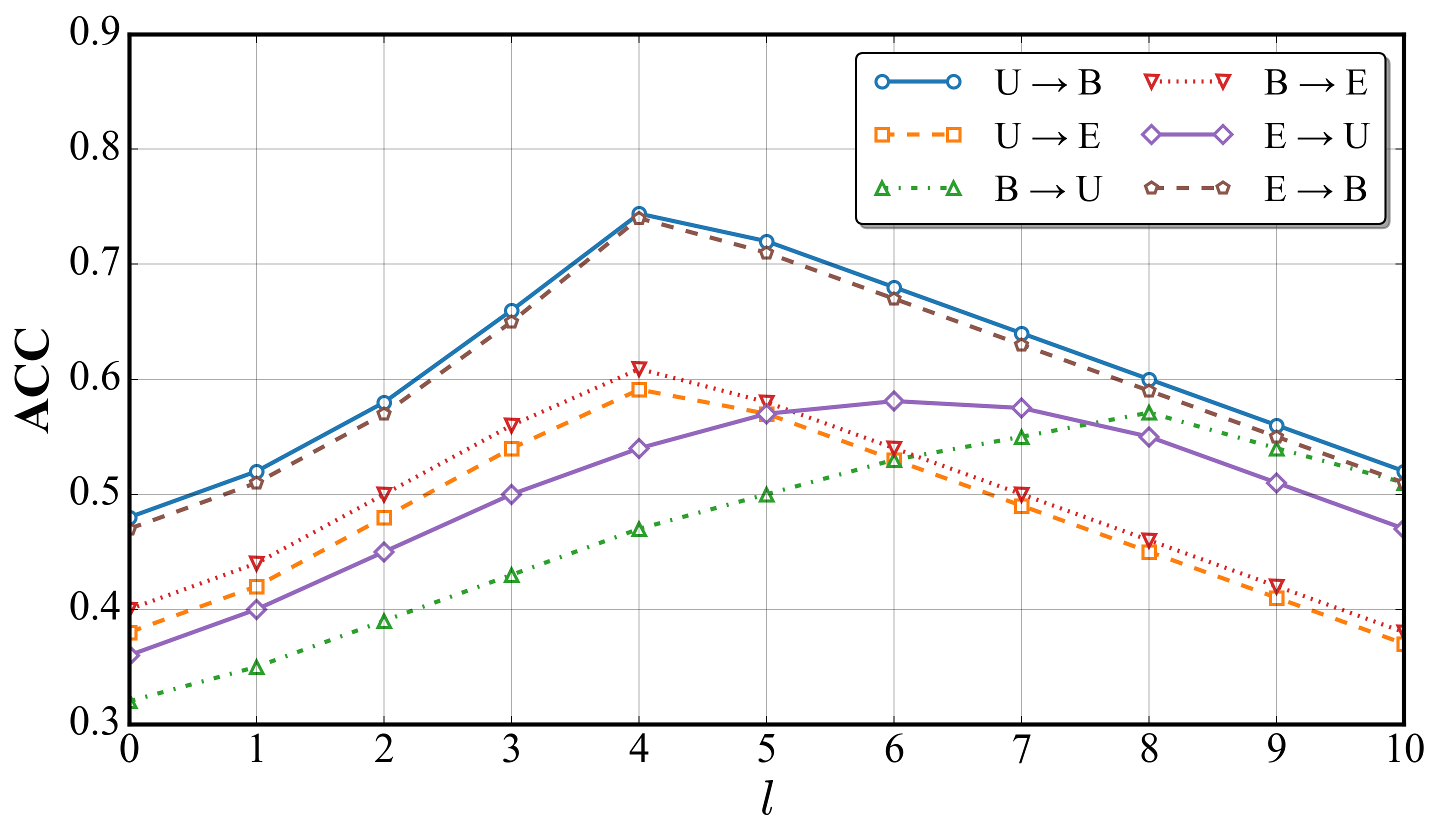}
    }
    \subfigure[WebKB]{
        \includegraphics[width=.47\linewidth]{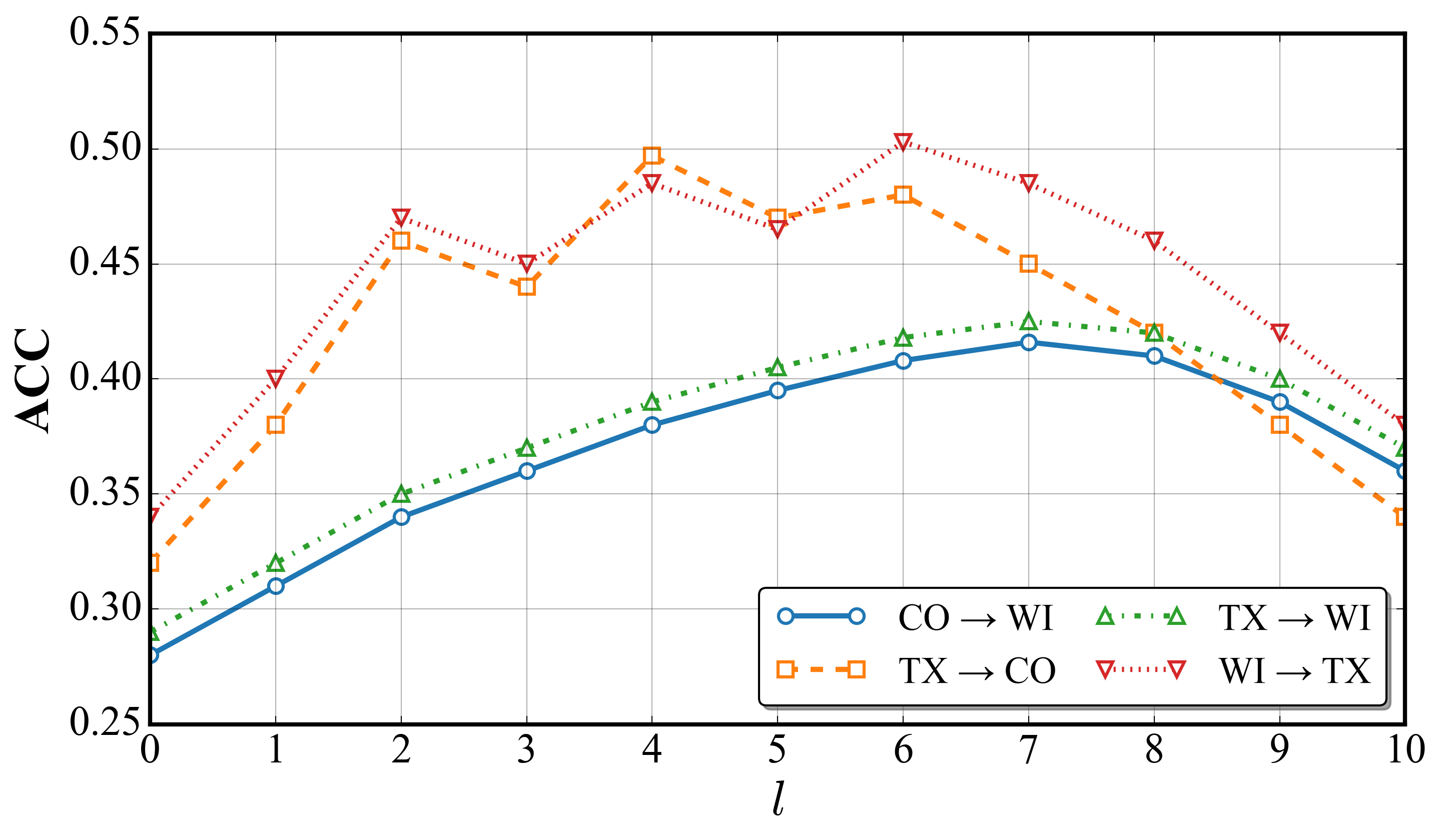}
    }
    \caption{The influence of parameters $l$ on Citation and Airport dataset.}
    \label{Parameter Analysis}
\end{figure}

We conduct a sensitivity analysis on the hyperparameter $l$, which denotes the number of steps in our structure reconstruction or filtering process. As shown in Figure~\ref{Parameter Analysis}, we evaluate the model’s performance across multiple cross-domain settings on both homophilic (top plot) and heterophilic (bottom plot) graph datasets.

Across all settings, we observe that increasing $l$ initially improves the accuracy (ACC), as it allows the model to better capture structural information and smooth signals over the graph. Performance generally peaks around $l=4$ or $l=5$, after which further increasing $l$ leads to performance degradation. This trend aligns with our observation in Fig. 3 (a) in the main paper.

Specifically, for homophilic graphs (e.g., U→B, E→B), the model achieves a clear performance peak at $l=4$, while for heterophilic datasets (e.g., WI→TX, TX→CO), the optimal $l$ varies slightly but still follows a similar concave pattern. These results confirm that moderate levels of propagation or reconstruction (i.e., intermediate $l$) are critical for balancing structural refinement and information preservation in unsupervised graph domain adaptation. This trend also aligns with our observation in Fig. 3 (b) in the main paper.

\subsection{Hyperparameter Setting}

To ensure a fair and robust evaluation of our method across diverse transfer scenarios, we conduct fine-grained hyperparameter selection for each domain adaptation task. Table~\ref{tab:hyperparams_flat} summarizes the selected values of the regularization coefficients $\mu_1$, $\mu_2$ and the propagation depth $l$ used during training.

We observe that the optimal choice of hyperparameters is closely linked to the structural and statistical characteristics of each dataset. For instance, in the Airport domain, which models flight connections between regions with moderately varying homophily (e.g., 0.69 for USA vs. 0.40 for Europe), relatively higher $\mu$ values and deeper propagation (e.g., $l=4$ for U$\rightarrow$B, E$\rightarrow$U) are preferred. This suggests that cross-region graph structures benefit from stronger regularization and richer multi-hop aggregation to bridge distributional discrepancies.

In contrast, tasks in the Citation domain (e.g., A$\rightarrow$C, D$\rightarrow$A) show consistent performance with relatively uniform and low $\mu$ settings (0.1–0.5) and moderate $l$ values (3–4). This is likely due to the shared scholarly domain and relatively consistent citation patterns across DBLPv8, ACMv9, and Citationv2, which reduce the need for aggressive adaptation or deep structural propagation.

For the Social domain (Blog1 and Blog2), where the homophily ratios are notably low (~0.4), shallow propagation depths ($l=2$ or $3$) suffice. This is consistent with prior observations in heterophilic graphs, where deep message passing tends to introduce noise. Regularization terms are also kept small, indicating that the alignment loss needs to be carefully balanced to avoid overfitting on noisy interactions.

In the WebKB benchmark (TX, CO, WI), known for its heterophilic structure (e.g., homophily as low as 0.06 in Texas), we adopt deeper propagation for certain transfer tasks such as WI$ \ rightarrow$TX ($l=8$) or CO $\rightarrow$ WI ($l=7$). These choices reflect the need for extended receptive fields to capture relevant but indirect relationships. Meanwhile, regularization remains mild to prevent over-constraining model flexibility.

Finally, the curated real-world pair A3$\rightarrow$A4 and A4$\rightarrow$A3, which simulate severe distribution shifts across semantic relations (PSP vs. PAP graph structures), are assigned balanced regularization weights ($\mu_1 = \mu_2 = 0.1$) and moderate $l$ values (4–5). This empirically confirms that homophily variation can be effectively managed through a moderate level of propagation coupled with light regularization.

\begin{table}[!t]
\centering
\small
\caption{Hyperparameter settings for each transfer task.}
\begin{tabular}{lccc}
\toprule[0.8pt]
\textbf{Transfer Task} & $\mu_1$ & $\mu_2$ & $l$ \\
\midrule[0.8pt]
U $\rightarrow$ B   & 0.5 & 0.5 & 4 \\
U $\rightarrow$ E   & 0.1 & 0.1 & 2 \\
B $\rightarrow$ U   & 0.1 & 0.1 & 4 \\
B $\rightarrow$ E   & 0.5 & 0.1 & 3 \\
E $\rightarrow$ U   & 0.5 & 0.5 & 4 \\
E $\rightarrow$ B   & 0.5 & 0.5 & 4 \\
\midrule[0.8pt]
A3 $\rightarrow$ A4 & 0.1 & 0.1 & 5 \\ 
A4 $\rightarrow$ A3 & 0.1 & 0.1 & 4 \\ 
\midrule[0.8pt]
B1 $\rightarrow$ B2 & 0.1 & 0.1 & 2 \\
B2 $\rightarrow$ B1 & 0.1 & 0.1 & 3 \\
\midrule[0.8pt]
A $\rightarrow$ D   & 0.1 & 0.1 & 3 \\
D $\rightarrow$ A   & 0.1 & 0.1 & 4 \\
A $\rightarrow$ C   & 0.5 & 0.5 & 4 \\
C $\rightarrow$ A   & 0.1 & 0.1 & 3 \\
C $\rightarrow$ D   & 0.1 & 0.1 & 4 \\
D $\rightarrow$ C   & 0.1 & 0.1 & 4 \\
\midrule[0.8pt]
CO $\rightarrow$ WI & 0.1 & 0.1 & 7 \\ 
TX $\rightarrow$ CO & 0.1 & 0.1 & 4 \\ 
TX $\rightarrow$ WI & 0.1 & 0.1 & 6 \\
WI $\rightarrow$ TX & 0.1 & 0.1 & 8 \\
\bottomrule
\end{tabular}
\label{tab:hyperparams_flat}
\end{table}

\section{Theoretical Analysis}

\begin{theorem}
    By using mixed graph filter for source and target graphs, for all input $z$, we have $|\mathbb{E}_{z\sim P_S^O}[h(z)]-\mathbb{E}_{z\sim P_T^O}[h(z)]| < \epsilon$, where $\epsilon$ denotes a small positive constant that can be made arbitrarily small as the optimization proceeds.
\label{one}
\end{theorem}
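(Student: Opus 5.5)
We will read the statement as follows. $P_S^O$ and $P_T^O$ are the distributions of the filtered and encoded homophilic representations $z=E_{\theta_O}\big((1-\gamma)(I-\tfrac12 L_O)^k X\big)$ on the source and target graphs. The hypothesis $h\in\mathcal{H}$ is taken to be bounded and $L_h$-Lipschitz; without some such regularity the statement cannot hold for all $h$. With this reading, the claim reduces to showing that a Wasserstein-type distance between $P_S^O$ and $P_T^O$ is driven below $\epsilon/L_h$.

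\emph{Step 1 (reduction to a transport distance).} By Kantorovich--Rubinstein duality, $\big|\mathbb{E}_{P_S^O}h-\mathbb{E}_{P_T^O}h\big|\le L_h\,W_1(P_S^O,P_T^O)$. Alternatively, if $h$ is only bounded by $M$, Pinsker's inequality gives
\[
\big|\mathbb{E}_{P_S^O}h-\mathbb{E}_{P_T^O}h\big|\le 2M\,\mathrm{TV}(P_S^O,P_T^O)\le 2M\sqrt{\tfrac12\,\mathrm{KL}(P_S^O\|P_T^O)} .
\]
The Pinsker route is the natural one here, because the alignment loss $\mathcal{L}_A$ contains exactly the term $\mathrm{KL}(Z_O^S\|Z_O^T)$.

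\emph{Step 2 (use of the optimization).} Training minimizes $\mathcal{L}=\mathcal{L}_{CR}+\mu_1\mathcal{L}_{RE}+\mu_2\mathcal{L}_A$. Since $\mathcal{L}_{CR},\mathcal{L}_{RE}\ge 0$, any iterate with $\mathcal{L}\le\eta$ satisfies $\mathrm{KL}(Z_O^S\|Z_O^T)\le\eta/\mu_2$. We will assume the optimizer reaches arbitrarily small $\eta$; this is what ``as the optimization proceeds'' must mean. Choosing $\eta<\mu_2\epsilon^2/(2M^2)$ then yields the claimed bound.

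\emph{Step 3 (spectral control, as an alternative route).} Independently of training, the filter itself contracts the shift. Write $g(\lambda)=(1-\gamma)(1-\lambda/2)^k$. The reconstructed Laplacian $L_O$ has spectrum in $[0,2]$, so $|g|\le 1-\gamma$ and $g$ is Lipschitz on the spectrum with constant $k(1-\gamma)/2$. When the node sets correspond, a standard matrix-function perturbation bound gives
\[
\|Z_O^S-Z_O^T\|\le \tfrac{k}{2}(1-\gamma)\,\|L_O^S-L_O^T\|_2\,\|X\| \;+\;(1-\gamma)\,\|X^S-X^T\| .
\]
Composing with a Lipschitz encoder and Lipschitz $h$ bounds the gap by a constant times the structural difference $S(G_S,G_T)$ plus the feature gap. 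This links the statement to the Definition of Structural Difference and to the second theorem.

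\emph{Main obstacle.} The main obstacle is Step 2. The KL term in $\mathcal{L}_A$ is computed between embedding matrices treated as (softmaxed) empirical distributions. It is not the KL between the population laws $P_S^O$ and $P_T^O$. Making the step rigorous requires two things. First, a finite-sample concentration term, for example via Rademacher complexity of $E_{\theta_O}$, so that the empirical KL controls the population divergence up to $O(n^{-1/2})$. Second, the assumption that training actually drives $\mathcal{L}_A\to 0$, which is not guaranteed in general. I would state both as explicit hypotheses. The resulting $\epsilon$ then has the form $2M\sqrt{\eta/(2\mu_2)}+O(n^{-1/2})$, which is small only when the loss is small and the sample is large.
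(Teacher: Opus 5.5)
Your argument is correct and is essentially the paper's own proof: the paper also reads the alignment term as $\mathrm{KL}(P_S^O\|P_T^O)$, uses nonnegativity of KL so that $\mathcal{L}_A\to 0$ forces each term to zero, applies Pinsker to get $\|P_S^O-P_T^O\|_{TV}\to 0$, and then bounds the gap in the expectation of $h$ by this TV distance. Your extra caveats are the right ones: the boundedness assumption on $h$ is what actually justifies the last step (the paper cites the encoder's Lipschitz property there, which plays no role), and your empirical-versus-population KL concern targets exactly the step the paper settles by ``interpretation''; your Step 3 is not needed for this statement.
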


The mixed graph filter is applied identically to $G_S$ and $G_T$, so that both domains are mapped into latent spaces with comparable spectral support.  Let $\mathcal{E}_\theta$ denote the shared encoder.  It is standard in theoretical analyses to assume that $\mathcal{E}_\theta$ is $\gamma$-Lipschitz.  In practice, $\gamma$ can be enforced to be $1$ by applying weight or spectral normalization to the final multi-layer perceptron.  Consequently, for any inputs $x,x'$ we have 

\begin{equation}
    ||\mathbb{E}_\theta (x)-\mathbb{E}_\theta (x')||_2 \le ||x-x'||_2
\end{equation}

Denote by $Z_S^O\sim P_S^O$, $Z_T^O\sim P_T^O$, $Z_S^E\sim P_S^E$ and $Z_T^E\sim P_T^E$ the random variables representing the latent features obtained from the source and target graphs via the homophilic and heterophilic paths, respectively.  The alignment loss in RSGDA is defined as $\mathcal{L_A}=KL(Z_E^S||Z_E^T)+KL(Z_O^S||Z_O^T)$.  Interpreting $Z_S^E$ and $Z_T^E$ as samples from the distributions $P_S^E$ and $P_T^E$, the loss becomes 

\begin{equation}
\mathcal{L_A}=KL(P_E^S||P_E^T)+KL(P_O^S||P_O^T)
\end{equation}

Minimizing $\mathcal{L}_A$ forces these distributions to converge; in particular, if $\mathcal{L}_A\to 0$ we have $D_{\mathrm{KL}}(P_S^O\|P_T^O)\to 0$ and $D_{\mathrm{KL}}(P_S^E\|P_T^E)\to 0$. By Pinsker’s inequality, 

\begin{equation}
    ||P-Q||_{TV} \le \sqrt{2KL(P||Q)}
\end{equation}

where $||\cdot||_{TV}$ denotes the total variation distance, defined by $||P-Q||_{TV}=sup_{A}|P(A)-Q(A)|$. When $\mathcal{L_A} \rightarrow 0$, we have 

\begin{equation}
\mathcal{L_A}=KL(P_E^S||P_E^T)+KL(P_O^S||P_O^T)\rightarrow 0
\end{equation}

Since the Kullback–Leibler divergence is non‑negative, it follows that $KL(P_E^S||P_E^T) \rightarrow 0, KL(P_O^S||P_O^T)\rightarrow 0$. Hence, 

\begin{equation}
    ||P_O^S-P_O^T|| \rightarrow 0,||P_E^S-P_E^T|| \rightarrow 0.
\end{equation}

This implies that the source and target latent representations in both the homophilic and heterophilic subspaces become indistinguishable in total variation distance.  Moreover, because the encoder is Lipschitz continuous, we have 

\begin{equation}
    |\mathbb{E}_{z\sim P_S^O}[h(z)]-\mathbb{E}_{z\sim P_T^O}[h(z)]| \le ||P_O^S-P_O^T|| < \epsilon
\end{equation}

Therefore, as $\mathcal{L}_A$ approaches zero, the difference in expected classifier outputs on source and target representations can be made arbitrarily small; the same reasoning holds for the heterophilic path.  This completes the proof.

\begin{theorem}[Domain Adaptation Bound for Graph Reconstruction]
Combining the empirical–source bound, the domain adaptation bound, and the divergence bound, we obtain that with probability at least $1-\delta$, for all $h\in\mathcal{H}$, we have
\begin{equation}
\begin{aligned}
\mathcal{R}_T(h)&
\;\le\;
\hat{\mathcal{R}}_S(h)
\;+\;
2\,\mathfrak{R}_n(\mathcal{H})
\;+\;
\sqrt{\frac{\ln(1/\delta)}{2n}}
\;\\
& +\;
C\,S(G_S,G_T)
\;+\;
\lambda,
\end{aligned}
\end{equation}
which is precisely inequality with a constant $C$. 
\end{theorem}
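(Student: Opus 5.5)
The bound follows by chaining three standard inequalities and then controlling the domain-divergence term spectrally. Let $\mathcal{H}$ act on the concatenated representation $Z=[Z_E,Z_O]$ produced by the filters in Eqs.~(\ref{H_filter})--(\ref{L_filter}) and the encoders. Assume the loss takes values in $[0,1]$ and is $\rho$-Lipschitz in $h(z)$.

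\textbf{Step 1: empirical-source bound.} The standard Rademacher bound (McDiarmid plus symmetrization) on $n$ labeled source nodes gives, with probability at least $1-\delta$ and uniformly in $h\in\mathcal{H}$,
$$\mathcal{R}_S(h)\le \hat{\mathcal{R}}_S(h)+2\mathfrak{R}_n(\mathcal{H})+\sqrt{\ln(1/\delta)/(2n)}.$$
For graphs this needs care, since nodes are not i.i.d. I would treat the filtered features $Z^S$ as fixed and the sampled labeled nodes as i.i.d. draws from $P_S$. This is the usual transductive simplification, and I would state it explicitly.

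\textbf{Step 2: domain adaptation bound.} Apply the Ben-David style inequality $\mathcal{R}_T(h)\le \mathcal{R}_S(h)+d(P_S,P_T)+\lambda$. Here $\lambda=\min_{h}(\mathcal{R}_S(h)+\mathcal{R}_T(h))$ is the joint optimal risk. For $d$, take an integral probability metric over the loss class, for example the Wasserstein-1 distance when $h$ is $L_h$-Lipschitz. This choice makes $d$ controllable by feature perturbations.

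\textbf{Step 3: divergence bound.} This is the key and hardest step. We must show $d(P_S,P_T)\le C\,S(G_S,G_T)$.

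\begin{itemize}
\item I would couple source and target nodes through a common feature input. The cleanest route is to assume $N_S=N_T$ (or a fixed alignment) and identical features $X$. The filtered outputs then differ only through the filter matrices.
\item Use the encoder Lipschitz assumption stated in the proof of Theorem~\ref{one}, together with Lipschitzness of $h$.
\item By Kantorovich duality, the Wasserstein distance is bounded by the average coupling distance. This gives
$$d\le \rho L_h\bigl(\|\gamma[(\tfrac12L_E^S)^k-(\tfrac12L_E^T)^k]X\|+\|(1-\gamma)[(I-\tfrac12L_O^S)^k-(I-\tfrac12L_O^T)^k]X\|\bigr)/\sqrt{N}.$$
\item Apply the telescoping identity $P^k-Q^k=\sum_{j=0}^{k-1}P^j(P-Q)Q^{k-1-j}$.
\item Note that $\|\tfrac12 L\|_2\le 1$ and $\|I-\tfrac12L\|_2\le 1$ for normalized Laplacians with spectrum in $[0,2]$. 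This gives $\|P^k-Q^k\|_2\le \tfrac{k}{2}\|L^S-L^T\|_2$.
\end{itemize}

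Collecting terms, $C=\tfrac{k}{2}\rho L_h\max(\gamma,1-\gamma)\|X\|_2/\sqrt{N}$ suffices.

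The main obstacle is the spectral-norm condition $\|\tfrac12 L\|_2\le1$. The reconstructed $A_O$ and $A_E$ are row-stochastic or top-5 graphs, not symmetric-normalized ones. I would therefore either symmetrically normalize them before forming $L_O,L_E$, which is consistent with the paper's ``normalized Laplacian'' wording, or absorb $\max\|L\|_2^{k-1}$ into $C$.

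A second gap is differing graph sizes or features between domains. If they differ, I would add a feature term and fold it into $\lambda$, or assume a shared feature distribution.

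Finally, substitute Steps 1 and 3 into Step 2 to obtain the stated inequality.
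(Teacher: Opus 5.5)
Your route is sound under the assumptions you state, but it bounds the divergence term differently from the paper.

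\textbf{The paper's route.} It keeps $\mathcal{H}$ as 0-1 classifiers and bounds the $\mathcal{H}$-divergence by $2\|P_S^O-P_T^O\|_{TV}+2\|P_S^E-P_T^E\|_{TV}$. It then proves $\|Z_O^S-Z_O^T\|_F\le\frac{k}{2}C_X\|L_O^S-L_O^T\|_2$ from a Fr\'echet-derivative formula. Finally it invokes an unspecified ``mild smoothness assumption'' to turn that feature perturbation into a total-variation bound. The Pinsker/KL step it cites plays no role in the final inequality.

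\textbf{Your route and what it buys.} You use an IPM over Lipschitz loss compositions (Wasserstein-1), so the node-identity coupling bounds the divergence directly. This closes exactly the step the paper leaves open. Total variation is not controlled by small perturbations at all: two empirical node distributions with disjoint supports are at TV distance 1, however close the points are. Your telescoping identity is also the correct noncommutative replacement for the paper's derivative $-\frac{k}{2}(I-\frac{1}{2}L)^{k-1}\Delta L$, which is valid only when $L$ and $\Delta L$ commute. It gives $\frac{k}{2}$ on the high-pass path too, where the paper's $\frac{k}{2^k}$ is unjustified once $\|L_E\|_2$ can approach $2$. Your caveat about the spectral norms of the row-stochastic and top-5 reconstructed graphs is warranted; the paper never checks it. Your $N_S=N_T$, shared-$X$ assumption is one the paper also uses silently: it applies both filters to the same $X$, and $S(G_S,G_T)$ is undefined otherwise.

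\textbf{What it costs.} Your theorem concerns a Lipschitz surrogate risk with Lipschitz hypotheses, not the paper's 0-1 risk. The Ben-David step needs the loss to satisfy a triangle inequality, e.g.\ $\ell(a,b)=|a-b|$. By contraction, the complexity term becomes $2\rho\,\mathfrak{R}_n(\mathcal{H})$ unless $\mathfrak{R}_n$ is read as the complexity of the loss class.

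\textbf{Small corrections.}
\begin{itemize}
\item Write $M$ for the filter difference and $R=MX$, with rows $r_i$. Since $\frac{1}{N}\sum_i\|r_i\|\le\|R\|_F/\sqrt{N}$ and $\|MX\|_F\le\|M\|_2\|X\|_F$, your constant should carry $\|X\|_F/\sqrt{N}$, not $\|X\|_2/\sqrt{N}$. Alternatively use $\|X\|_2$ with no $1/\sqrt{N}$, via $\max_i\|r_i\|\le\|R\|_2$. The paper makes the same Frobenius/spectral slip.
\item The Lipschitz constant of $z\mapsto\ell(h(z),h'(z))$ is $2\rho L_h$, not $\rho L_h$.
\item A feature-shift term cannot be folded into $\lambda$, which is a fixed joint optimal risk. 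It must appear as its own additive term.
\end{itemize}
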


Let $G_S$ be a labeled source graph with $n$ observed nodes $\{(x_i,y_i)\}_{i=1}^n$ sampled i.i.d. from a distribution $P_S$, and let $G_T$ be an unlabeled target graph with distribution $P_T$.  Let $\mathcal{H}$ denote a hypothesis class of measurable 0-1 classifiers.  For any $h\in\mathcal{H}$, define the empirical risk on the source graph as 

\begin{equation}
    \hat{\mathcal{R}}_S(h)
=\frac{1}{n}\sum_{i=1}^n \mathbb{I}\{h(x_i)\neq y_i\}
\end{equation}

the true risks as

\begin{equation}
    \mathcal{R}_S(h)=\Pr_{(x,y)\sim P_S}[h(x)\neq y],
\mathcal{R}_T(h)=\Pr_{(x,y)\sim P_T}[h(x)\neq y],
\end{equation}

and let $\lambda := \inf_{h\in\mathcal{H}} (\mathcal{R}_S(h)+\mathcal{R}_T(h))$ be the joint optimal error. A standard result \cite{bartlett2002rademacher} states that for any $\delta\in(0,1)$, with probability at least $1-\delta$ over the sampling of the source data, the following holds uniformly for all $h\in\mathcal{H}$:

\begin{equation}
    \mathcal{R}_S(h)
\;\le\;
\hat{\mathcal{R}}_S(h)
\;+\;
2\,\mathfrak{R}_n(\mathcal{H})
\;+\;
\sqrt{\frac{\ln(1/\delta)}{2n}},
\end{equation}

where $\mathfrak{R}_n(\mathcal{H})$ is the Rademacher complexity of $\mathcal{H}$ with respect to the source distribution. \citet{ben2006analysis} established that for any pair of distributions $P_S,P_T$ on $\mathcal{X}$ and any $h\in\mathcal{H}$,

\begin{equation}
    \mathcal{R}_T(h)
\;\le\;
\mathcal{R}_S(h)
\;+\;
d_{\mathcal{H}}(P_S,P_T)
\;+\;
\lambda,
\end{equation}

where $d_{\mathcal{H}}$ is the $\mathcal{H}$‑divergence:

\begin{equation}
\begin{aligned}
        d_{\mathcal{H}}(P_S,P_T)
=
2\,\sup_{h,h'\in\mathcal{H}}
\Bigl|\Pr_{x\sim P_S}[h(x)\neq h'(x)] - \Pr_{x\sim P_T}[h(x)\neq h'(x)]\Bigr|
\end{aligned}
\end{equation}

RSGDA does not operate directly on raw node features $x$, but on latent representations obtained through graph reconstruction, mixed filtering, and encoding.  Let $Z_S$ and $Z_T$ denote the random latent vectors produced from the source and target graphs respectively. Concretely, recall that RSGDA reconstructs homophilic and heterophilic graphs $(L_O^S,L_O^T,L_E^S,L_E^T)$, applies a low‑pass filter $F_O(L)=(I-\frac{1}{2}L)^k$ and a high‑pass filter $F_E(L)=(\frac{1}{2}L)^k$ to the node features $X$, and then uses a $\gamma$-Lipschitz encoder $\mathcal{E}_\theta$.  We thus obtain latent distributions 

\begin{equation}
    P_O^S = \text{Law}\bigl(\mathcal{E}_\theta(F_O(L_O^S)X)\bigr),
P_O^T = \text{Law}\bigl(\mathcal{E}_\theta(F_O(L_O^T)X)\bigr),
\end{equation}

and analogous distributions $P_E^S,P_E^T$ for the heterophilic path.  The overall latent distribution is the mixture of the homophilic and heterophilic components. A basic result shows that the domain divergence $d_{\mathcal{H}}(P_S,P_T)$ can be bounded in terms of the total variation distance of these latent distributions:

\begin{equation}
\label{premier}
    d_{\mathcal{H}}(P_S,P_T)
\;\le\;
2\,\bigl\|P_S^O - P_T^O\bigr\|_{TV}
\;+\;
2\,\bigl\|P_S^E - P_T^E\bigr\|_{TV}.
\end{equation}

The factor 2 comes from the fact that $\mathcal{H}$ divergence is bounded by twice the TV distance and that we consider both homophilic and heterophilic parts. We now need to relate the TV distances to graph reconstruction.  Two tools are employed:

1. Pinsker’s inequality: For any distributions $P,Q$, $\|P-Q\|_{TV}\le \sqrt{2 D_{\mathrm{KL}}(P\|Q)}$.  In RSGDA, small $\mathcal{L}_A$ implies both $\|P_S^O - P_T^O\|_{TV}$ and $\|P_S^E - P_T^E\|_{TV}$ are small. We have already prove that in Theorem \ref{one}.

2. Spectral difference to latent difference:  Consider the low‑pass path.  We have

\begin{equation}
    Z_O^S = \mathcal{E}_\theta\bigl( (I-\tfrac{1}{2}L_O^S)^k X \bigr), 
Z_O^T = \mathcal{E}_\theta\bigl( (I-\tfrac{1}{2}L_O^T)^k X \bigr).
\end{equation}

Because $\mathcal{E}_\theta$ is 1‑Lipschitz (after normalization) and $\|X\|_2\le C_X$, the Frechet derivative of the matrix function $L\rightarrow (I-\frac{1}{2}L)^k$ implies that

\begin{equation}
\label{twice}
    \bigl\|Z_O^S - Z_O^T\bigr\|_F
\;\le\;
\frac{k}{2}\,C_X\,
\bigl\|L_O^S - L_O^T\bigr\|_2
\;=\;
\alpha\,\bigl\|L_O^S - L_O^T\bigr\|_2,
\end{equation}

Similarly, for the heterophilic path,

\begin{equation}
    \bigl\|Z_E^S - Z_E^T\bigr\|_F
\;\le\;
\frac{k}{2^k}\,C_X\,
\bigl\|L_E^S - L_E^T\bigr\|_2
\;=\;
\beta\,\bigl\|L_E^S - L_E^T\bigr\|_2.
\end{equation}

The key insight is that the difference between latent representations is controlled by the spectral differences of the reconstructed Laplacians.  RSGDA’s graph reconstruction is designed to minimize these differences, thereby shrinking the equations above. Under a mild smoothness assumption on the encoder outputs, it follows from standard results in distributional stability that the total variation distance between the distributions of $P_O^S$ and $P_O^T$ can be bounded by a constant multiple of $\|Z_O^S - Z_O^T\|_F$.  Thus we obtain

\begin{equation}
\begin{aligned}
    \label{dernier}
    \|P_O^S - P_O^T\|_{TV}
\;\le\;
C'_1\,\bigl\|L_O^S - L_O^T\bigr\|_2,\\
\|P_E^S - P_E^T\|_{TV}
\;\le\;
C'_2\,\bigl\|L_E^S - L_E^T\bigr\|_2,
\end{aligned}
\end{equation}

for some universal constants $C'_1,C'_2$ depending on $k,C_X$. Combining equation \ref{premier} with \ref{dernier} and the definition of the structural difference $S(G_S,G_T)=\|L_O^S-L_O^T\|_2 + \|L_E^S-L_E^T\|_2$ yields

\begin{equation}
    d_{\mathcal{H}\Delta\mathcal{H}}(P_S,P_T)
\;\le\;
C\,S(G_S,G_T),
\end{equation}

for $C=2(C'_1+C'_2)$. Combining the empirical–source bound, the domain adaptation bound, and the divergence bound, we obtain that with probability at least $1-\delta$, for all $h\in\mathcal{H}$, we have

\begin{equation}
\begin{aligned}
    \mathcal{R}_T(h)
\;&\le\;
\hat{\mathcal{R}}_S(h)
\;+\;
2\,\mathfrak{R}_n(\mathcal{H})
\;+\;
\sqrt{\frac{\ln(1/\delta)}{2n}} 
\;\\
\;&+\;  
C\,S(G_S,G_T)
\;+\;
\lambda,
\end{aligned}
\end{equation}

\textbf{Proof of equation \ref{twice}}. Let us define the filter function as $f(L) = \left(I - \frac{1}{2}L\right)^k$. The Frechet derivative is given by:

\begin{equation}
    Df_L(\Delta L) = -\frac{k}{2} \left(I - \frac{1}{2}L\right)^{k-1} \cdot \Delta L.
\end{equation}

Thus, its operator norm satisfies $\|Df_L\| \le \frac{k}{2}.$ Applying this bound, we obtain: 

\begin{equation}
\begin{aligned}
        \|Z_O^S - Z_O^T\|_F &= \|f(L_O^S)X - f(L_O^T)X\|_F \\
        &\le \frac{k}{2} \cdot \|X\|_2 \cdot \|L_O^S - L_O^T\|_2.
\end{aligned}
\end{equation}

Letting $C_X = \|X\|_2$, we conclude:

\begin{equation}
    \|Z_O^S - Z_O^T\|_F \le \alpha \cdot \|L_O^S - L_O^T\|_2, \quad 
\text{where } \alpha = \frac{k}{2} C_X.
\end{equation}

\end{document}